\newcommand{\threefield}[3]{$#1\mid#2\mid#3$}
\title{Concurrency Constrained Scheduling with Tree-Like Constraints}
\author{Hans L. Bodlaender\inst{1} \and Danny Hermelin\inst{2} \and Erik Jan van Leeuwen\inst{1}}
\date{}
\institute{Department of Information and Computing Sciences, Utrecht University, Utrecht, the Netherlands \and Department of Industrial Engineering and Management, Ben-Gurion University of the Negev, Beer-Sheva, Israel}
\begin{document}

\sloppy
\maketitle

\begin{abstract}
This paper investigates concurrency-constrained scheduling problems, where the objective is to construct a schedule for a set of jobs subject to concurrency restrictions. Formally, we are given a conflict graph $G$ defined over a set of $n$ jobs, where an edge between two jobs in $G$ indicates that these jobs cannot be executed concurrently. Each job may have distinct attributes, such as processing time, due date, weight, and release time. The goal is to determine a schedule that optimizes a specified scheduling criterion while adhering to all concurrency constraints. This framework offers a versatile model for analyzing resource allocation problems where processes compete for shared resources, such as access to shared memory. From a theoretical perspective, it encompasses several classical graph coloring problems, including \textsc{Chromatic Number}, \textsc{Sum Coloring}, and \textsc{Interval Chromatic Number}.

\hspace{15pt}
Given that even the simplest concurrency-constrained scheduling problems are NP-hard for general conflict graphs, this study focuses on conflict graphs with bounded treewidth. Our results establish a dichotomy: Some problems in this setting can be solved in FPT time, while others are shown to be XALP-complete for treewidth as parameter. Along the way, we generalize several previously known results on coloring problems for bounded treewidth graphs. 
Several of the FPT algorithms are based on the insight that completion times are bounded by the Grundy number of the conflict graph --- the fact that this number is bounded by the product of treewidth and the logarithm of the number
of vertices then leads to the FPT time bound.

\keywords{treewidth \and
scheduling \and
parameterized complexity \and
concurrency constraints \and
FPT \and
XALP}

\end{abstract}

\section{Introduction}
\label{section:introduction}
In offline concurrency constrained scheduling, we have a set of $n$ jobs $\{1,\ldots,n\}$ for non-preemptive processing. We have an unlimited number ($n$ is enough) of identical machines. In the most basic setting, each job $j \in \{1,\ldots,n\}$ requires $p_j=1$ (unit) \emph{processing time} on any of the given machines. However, some pairs of jobs cannot be processed concurrently at the same time. These \emph{concurrency constraints} are given by a \emph{conflict graph} $G=(\{1,\ldots,n\},E)$, where two jobs $i$ and $j$ cannot be processed concurrently if and only if $\{i,j\} \in E$. Bar-Noy \emph{et al.}~\cite{Bar-NoyBHST98,Bar-NoyHKSS00} observed that concurrency constrained scheduling occurs in resource allocation problems with constraints imposed by the competition of processes on conflicting resource requirements. Similar situations occur in other settings, e.g.\ almost all modern database systems implement some sort of mutual exclusion scheme. A \emph{schedule} is a function $C: \{1,\ldots,n\} \to \mathbb{N}$ specifying the completion time of each job. We write $C_j=C(j)$ following standard scheduling notation. A schedule $C$ is said to be \emph{feasible} if $C(i) \neq C(j)$ for every $\{i,j\} \in E$. A feasible schedule thus is exactly a \emph{proper coloring} of $G$, where no two adjacent vertices get the same color. In concurrency constrained scheduling we aim to find a proper coloring of $G$ that optimises some scheduling criterion (as discussed below).



Following the three-field notation of Graham \emph{et al.} for scheduling problems~\cite{GrahamEtAl79}, we denote the concurrency constrained setting with unit processing times by \threefield{P_n}{conc,p_j=1}{\ast}. The first field denotes the machine model being used, in our case $n$ identical machines that run in parallel. The second field denotes constraints on the problem, where `conc' denotes that concurrency constrains (via a conflict graph) are present, while $p_j =1$ signifies that all jobs require a unit amount of processing time on any of the $n$ machines. Additional constraints and restrictions can be added to this field, as will be discussed later on. Finally, the last field `$*$' denotes the objective function that we optimise. 

Two notable and well-studied concurrency constrained scheduling problems are obtained by considering the \emph{makespan}~$C_{\max}= \max_j C_j$ and the \emph{total completion time} $\sum C_j$ objective functions. In the former, we wish to minimize the time in which the last job completes in our schedule. This directly translates to the classical \textsc{Chromatic Number} problem, as the minimum number of colors~$\chi(G)$ necessary to properly color $G$ is precisely $C_{\max}$ in a schedule with minimum makespan. In the latter, we wish to minimize the sum of completion times in our schedule, which is the same as minimizing the average completion time in the schedule. This directly translates to the \textsc{Chromatic Sum} problem~\cite{Kubicka04,KubickaS89,Supowit87}, a widely-studied graph coloring problem (see e.g.~\cite{Bar-NoyBHST98,Bar-NoyK98,BonomoV09,BorodinIYZ12,Gavril72,HalldorssonK02,Jansen97man,Jansen97,Kubicka04,KubickaS89,MalafiejskiGJK04,Marx05,NicolosoSS99,Salavatipour00,salavatipour2003sum,Szkaliczki99}) which has been originally studied in the context of VLSI design~\cite{Supowit87}. In our setting, these two problems are denoted by \threefield{P_n}{conc, p_j =1}{C_{\max}} and \threefield{P_n}{conc, p_j =1}{\sum C_j} respectively.

Besides unit processing times, we may also consider that
each job $j$ has an integer processing time $p_j \geq 1$ (which it requires on any machine). The time intervals in which two conflicting jobs $i$ and $j$ are processed must be disjoint. Thus, we now require a feasible schedule to have:
\begin{itemize}
\item $C_j - p_j \geq 0$ for all $j \in \{1,\ldots,n\}$ (no jobs can start before time 0), \emph{and}
\item $[C_i-p_i,C_i) \cap [C_j-p_j,C_j) = \emptyset$ for all $\{i,j\} \in E$ (all concurrency constraints are met).
\end{itemize}
We denote this setting by 
\threefield{P_n}{conc}{\ast}. Note that in the case of arbitrary processing times, some proper colorings of $G$ no longer correspond to feasible schedules. However, every feasible schedule is still a proper coloring. 

The \threefield{P_n}{conc}{C_{\max}} problem was previously studied in the literature under the names \textsc{Interval Chromatic Number}~\cite{Golumbic80,Kubale89} or \textsc{Non-Preemptive Multichromatic Number}~\cite{Bar-NoyHKSS00}. When restricting the conflict graph $G$ to be an interval graph we get the classical \textsc{Dynamic Storage Allocation} problem~\cite{GareyJ79} (also known as the \textsc{Ship Building} problem~\cite{Golumbic80}), and when $G$ is a line graph we get the most basic case of the \textsc{Scheduling File Transfers} problem~\cite{CoffmanEtAl85}. The \threefield{P_n}{conc}{\sum C_j} problem was previously studied in the literature under the name \textsc{Non-Preemptive Multichromatic Sum}~\cite{Bar-NoyHKSS00}.

Apart from $C_{\max}$ and $\sum C_j$, there are several other natural criteria in the context of concurrency constrained scheduling. First, we can assume that each job~$j$ has a due date~$d_j$ associated with it, where if it completes after time~$d_j$ there is a penalty incurred to our schedule. Two such criteria that are widely considered in scheduling literature (see \emph{e.g.}~\cite{PinedoBook})  are the \emph{maximum lateness} criterion and the \emph{total tardiness} criterion. In the former, we wish to minimize the maximum lateness $L_{\max}= \max_j L_j$ in our schedule, where the \emph{lateness} of a job~$j$ is defined by~$L_j= C_j-d_j$. In the latter, we wish to minimize the total tardiness $\sum_j T_j$ of our schedule, where the \emph{tardiness} of a job~$j$ is given by~$T_j=\max\{0, C_j-d_j\}$. Note that $C_{\max}$ is a special case of $L_{\max}$ where each job has due date $d_j=0$, and in this special case $\sum T_j$ becomes $\sum C_j$ as well. Weighted variants, where each job $j$ has an additional integer weight~$w_j$, exist for  all criteria above as well.

Second, we can consider a release time $r_j \geq 0$ for each job $j \in \{1,\ldots,n\}$. Note that in the above, all jobs are released at time $r_j=0$. Release times are different than due dates in that release times are hard constraints that must be satisfied, where we require any feasible schedule to have $C_j - p_j \geq r_j$ for all jobs~$j$. We assume throughout the paper that all job parameters $p_j$, $d_j$, $w_j$, and~$r_j$ are given in unary, and so are bounded by the total input length.

\paragraph{Bounded treewidth conflict graphs:}

As the most basic cases of concurrency constrained scheduling, \textsc{Chromatic Number} and \textsc{Chromatic Sum}, are already NP-hard~\cite{Karp72,KubickaS89}, it makes sense to study cases where the conflict graph has some restricted structure. We focus on graphs of bounded treewidth and pathwidth. 
\textsc{Chromatic Number} is well known to be solvable in FPT time parameterized by treewidth, as there exists a $O^*(q^{tw+1})$-time\footnote{Here and elsewhere in the paper we use $O^*()$ to suppress polynomial factors.} algorithm for determining whether a graph~$G$ can be properly colored with $q$ colors given a tree decomposition of width $tw$ for $G$~\cite{CyganFKLMPPS15,TelleP1997}. 
Moreover, this running time cannot substantially be improved assuming the Strong Exponential Time Hypothesis (SETH)~\cite{LokshtanovMS18}. 
Since the chromatic number of a graph is at most its treewidth, this yields an algorithm with running time $O^*(tw^{tw+1} )$ for \threefield{P_n}{conc, p_j =1}{C_{\max}}. 
The \textsc{Chromatic Sum} (\emph{i.e.} \threefield{P_n}{conc, p_j =1}{\sum C_j}) problem has also been previously studied in the context of bounded treewidth graphs.
Specifically, Jansen~\cite{Jansen97} proved that an optimal coloring for bounded treewidth graphs uses $O(tw \log n)$ colors, and used the $O^*(q^{tw+1})$-time algorithm mentioned above to solve a slightly more generalised version of the \threefield{P_n}{conc, p_j =1}{\sum C_j} problem in $O^*((tw \cdot \log n)^{tw+1})$ time (which is also FPT time, albeit a bit slower than $O^*(tw^{tw+1})$). 
Halld\'{o}rsson and Kortsarz showed that the more general \threefield{P_n}{conc, p_j\leq p}{C_{\max}} and \threefield{P_n}{conc, p_j\leq p}{\sum C_j} problems, where all processing times are bounded by some fixed integer $p$, can be solved in $O^*((p \cdot tw \cdot \log n)^{tw+1})$ time~\cite{HalldorssonK02}.

\paragraph{XNLP/XALP problems:}
XNLP, introduced by~\cite{BODLAENDER2024105195,ElberfeldST15}, is the class of all problems solvable in FPT non-deterministic time using only $f(k) \log n$ space. Several natural problems problems parameterized by pathwidth turned out to be complete for this class~\cite{BodlaenderGJPP22}, and in particular problems that require storing $n^{\Omega(f(pw))}$ space (for a non-decreasing $f$ that grows to infinity) in their standard path decomposition dynamic programming algorithm. The class XALP was introduced by~\cite{BodlaenderGJPP22} as analogue to XNLP. Similar to XNLP, XALP (formally defined in Section~\ref{section:preliminaries}) captures problems whose dynamic programming algorithms on tree decompositions require $n^{\Omega(f(tw))}$ time and space for similar $f$ (see e.g.,~\cite{BodlaenderGJPP22,BodlaenderMOPL23,BodlaenderS24}).


\subsection*{Our results}
We investigate in detail the complexity of concurrency-constrained scheduling problems parameterized by the treewidth (resp.~pathwidth) of the conflict graph. We provide a complete classification of all problems discussed above into two categories: those that are solvable in FPT time and those that are XALP-complete (resp.~XNLP-complete). See the overview in Table~\ref{tab:unweighted}.

\begin{table}[bt]
\begin{center}
\begin{tabular}{c|c|c||c|c}
 & $C_{\max}$ & $\sum C_j$ & $L_{\max}$ & $\sum T_j$ \\
\hline
\hline
$r_j =0, p_j = 1$ & $tw^{O(tw)}$~\cite{TelleP1997} & $(tw \cdot \log n)^{O(tw)}$~\cite{Jansen97} & $(tw \cdot \log n )^{O(tw)}$ & $(tw \cdot \log n )^{O(tw)}$\\
 & $pw^{O(pw)}$~\cite{TelleP1997} & $pw^{O(pw)}$ \cite{Jansen97} & $pw^{O(pw)}$ & $pw^{O(pw)}$\\
\hline
\hline
$r_j \geq 0, p_j = 1$ & $(tw \cdot \log n)^{O(tw)}$ & XALP-complete & XALP-complete & XALP-complete\\
& $pw^{O(pw)}$ & XNLP-complete & XNLP-complete & XNLP-complete \\
\hline
\hline
$r_j =0, p_j \geq 1$ & XALP-complete & XALP-complete & XALP-complete & XALP-complete\\
&  XNLP-complete & XNLP-complete & XNLP-complete & XNLP-complete \\
\end{tabular}
\end{center}
\caption[]{Complexity results for unweighted concurrency constrained scheduling problems, with respectively treewidth and pathwidth as parameter. The columns represent the objective function, while the rows correspond to constraints on release times and processing times. The running times presented ignore polynomial factors in $n$. }
\label{tab:unweighted}
\end{table}


On the positive side, we show that for the case of unit processing times and no release times (\emph{i.e.} $p_j=1$ and $r_j=0$ for all $j \in \{1,\ldots,n\}$), all problems we consider are FPT. In particular, we generalize the result of Jansen~\cite{Jansen97} and show that for any so called \emph{regular scheduling criterion}~\cite{PinedoBook}, \emph{i.e.} any scheduling criterion which is non-decreasing in the completion time of the jobs, there exists an optimal schedule $C$ with makespan $C_{\max} \leq tw(G) \cdot \log n +1$. All scheduling criteria discussed above, as well as many others, are regular. We also show that this bound increases by a factor of $O(p)$ for the case where the processing time of each job is at most $p$, generalizing the result of Halld\'{o}rsson and Kortsarz~\cite{HalldorssonK02}. 

Some of our FPT algorithms are based on the insight that completion times are bounded by the Grundy number of the conflict graph. Telle and Proskurowski~\cite{TelleP1997} showed that the Grundy number of a graph with treewidth $tw$ is $O(tw \cdot \log n)$. If $k$ bounds the maximum completion time, then an algorithm with running time $k^{O(tw)} \cdot n^{O(1)}$ and the Grundy number bound implies the problem is FPT, as $O(tw \cdot \log n)^{O(tw)}\cdot n^{O(1)}$ is of the form $f(tw)\cdot n^{O(1)}$ (cf.~\cite[p.~74, Answer 3.18]{CyganFKLMPPS15}).

On the negative side, we show that if jobs have unbounded processing times or release times,
then only
\threefield{P_n}{conc,r_j \geq 0,p_j=1}{C_{\max}}
remains FPT;  all other problems become XALP-complete. 
Our hardness results here are obtained by developing reductions from the XALP-complete \textsc{Precoloring Extension} problem. 
We note that XALP-completeness results have two important implications.
First, these problems are in XP,
i.e., have an algorithm with running time of the form $O(|I|^{f(k)})$, where $|I|$ is the size of the instance.
Second, a conjecture by Pilipczuk and Wrochna~\cite{PilipczukW18} implies that XALP-hard problems are unlikely to have an XP-algorithm with space usage $f(k) \cdot |I|^{O(1)}$.

We also consider the weighted variants of our problems, where each job $j$ has an integer weight $w_j$ specifying its significance. In this context, we write $WC_{\max}= \max_j w_jC_j$ and $WL_{\max}= \max_j w_jL_j$. Again, when $p_j=1$ and $r_j=0$ for all jobs $j$, all problems are FPT, although with a slightly slower running time of $O^*((tw \cdot \log n )^{tw+1}))$ for \threefield{P_n}{conc,p_j=1}{WC_{\max}} in comparison to the unweighted case. We also show that \threefield{P_n}{conc,p_j=1,r_j \geq 0}{WC_{\max}} is XALP-complete, as opposed to the unweighted \threefield{P_n}{conc,p_j=1,r_j \geq 0}{C_{\max}} problem which is FPT. This hardness result again follows via a reduction from \textsc{Precoloring Extension} parameterized by treewidth.

For pathwidth, the running time of the algorithms decreases, as the Grundy number of a graph of pathwidth $pw$ is~$O(pw)$~\cite{BelmonteKLMO22,DujmovicJW12}. For the hardness results, since \textsc{Precoloring Extension} is XNLP-complete parameterized by pathwidth~\cite{BODLAENDER2024105195}, we note that we can reuse the above hardness constructions.

\section{Preliminaries}
\label{section:preliminaries}
We use $G=(\{1,\ldots,n\},E)$ to denote the conflict graph of our concurrency constrained scheduling problems. Standard graph-theoretic notation is found in~\cite{Diestel}. We assume readers are familiar with the concepts of treewidth and pathwidth.

\paragraph{XALP-completeness:}

The class XALP is the class of all parameterized problems that can be solved in $f(k) \cdot n^{O(1)}$ time and $f(k) \cdot \log n$ space on a non-deterministic Turing Machine with access to an auxiliary stack (with only top element lookup allowed)~\cite{BodlaenderGJPP22}. XALP is closed by reductions using at most $f(k) \log n$ space and FPT time. These conditions are implied by using at most $f(k) + O(\log n)$ space. Reductions
respecting the latter condition are \emph{parameterized logspace} reductions.

We use the \textsc{Precoloring Extension} problem frequently in our reductions. In \textsc{Precoloring Extension}, we are given a graph $H=(V,F)$, and integer $k$, and a proper coloring $\chi:V_0 \to \{1,\ldots,k\}$ defined only on a subset of vertices $V_0 \subseteq V$. Our goal is to determine whether we can extend $\chi$ to the set $V \setminus V_0$ of remaining vertices. That is, is there a proper coloring $\chi' : V \to \{1,\ldots,k\}$ with $\chi'(v)=\chi(v)$ for all $v \in V_0$. It is XALP-complete for parameter treewidth~\cite{BodlaenderGJPP22} and XNLP-complete for parameter pathwidth \cite{BODLAENDER2024105195}.

\paragraph{Tree decomposition dynamic programming:}
We show that one can compute an optimal schedule~$C$ for any concurrency constrained scheduling problem considered in this paper in $O^*(k^{tw+1})$ time, assuming there exists such a schedule with $C_{\max} \leq k$. We prove this specifically for the two most general problems we consider in the paper, the \threefield{P_n}{conc,r_j \geq 0}{WL_{\max}} and \threefield{P_n}{conc,r_j \geq 0}{\sum w_jT_j} problems. The proof uses an algorithm very similar to the $O^*(k^{tw+1})$-time algorithm for properly coloring graphs of treewidth $tw$ with $k$ colors~\cite{CyganFKLMPPS15,TelleP1997}.

\begin{theorem}
\label{thm:TreewidthDP}%
The \threefield{P_n}{conc,r_j \geq 0}{WL_{\max}} and \threefield{P_n}{conc,r_j \geq 0}{\sum w_jT_j} problems are both solvable in $O^*(k^{tw+1})$-time, where $k$ is a bound on the makespan of some optimal schedule.
\end{theorem}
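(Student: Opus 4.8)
The plan is to adapt the classical $O^*(q^{tw+1})$-time dynamic program for $q$-coloring bounded-treewidth graphs, replacing the role of a color by the \emph{completion time} of a job and replacing the ``proper coloring'' feasibility check by the richer concurrency and release-time constraints. Starting from a given tree decomposition of width $tw$, I would first turn it into a nice tree decomposition $(\mathcal{T},\{X_t\})$ of the same width with $O(tw\cdot n)$ leaf, introduce, forget, and join nodes and an empty root bag. For a node $t$, let $V_t$ be the set of jobs appearing in bags of the subtree rooted at $t$. The DP table is indexed by pairs $(t,c)$, where $c:X_t\to\{1,\ldots,k\}$ is a candidate completion-time assignment to the jobs currently in the bag; since $|X_t|\le tw+1$ and there are at most $k$ choices per job, there are at most $k^{tw+1}$ entries per node.

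Call an assignment $c$ to a bag \emph{locally feasible} if it respects every constraint internal to the bag: for each job $j\in X_t$ we require $c(j)-p_j\ge r_j$ and $c(j)\ge p_j\ge 1$, and for each edge $\{i,j\}\in E$ with $i,j\in X_t$ we require the processing intervals $[c(i)-p_i,c(i))$ and $[c(j)-p_j,c(j))$ to be disjoint, i.e.\ $c(i)\le c(j)-p_j$ or $c(j)\le c(i)-p_i$. Because every edge of $G$ has both endpoints together in some bag and every job lies in some bag, a global assignment $C:\{1,\ldots,n\}\to\{1,\ldots,k\}$ is a feasible schedule if and only if its restriction to every bag is locally feasible. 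This equivalence is exactly what lets the DP certify feasibility through purely local, per-bag pairwise checks, just as properness is checked edge-by-edge in the coloring DP.

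For each locally feasible $c$, the entry $A_t[c]$ stores the best achievable objective value restricted to the jobs in $V_t\setminus X_t$, minimized over all feasible extensions of $c$ to $V_t$; for $\sum w_jT_j$ this is the minimum attainable sum of $w_jT_j$ over forgotten jobs, and for $WL_{\max}$ it is the minimum attainable maximum of $w_jL_j$, with locally infeasible $c$ receiving value $+\infty$. The transitions then mirror the coloring DP. A leaf has the empty assignment with value $0$ (resp.\ $-\infty$). An introduce node extends $c$ to the new job by trying all $k$ completion times and discarding those breaking local feasibility, leaving $V_t\setminus X_t$ and hence the stored value unchanged. A forget node of job $v$ sets $A_t[c]=\min_a\bigl(A_{t_1}[c\cup(v\mapsto a)]+w_v\max\{0,a-d_v\}\bigr)$ for the sum objective and $A_t[c]=\min_a\max\bigl(A_{t_1}[c\cup(v\mapsto a)],\,w_v(a-d_v)\bigr)$ for the bottleneck objective, folding in $v$'s contribution. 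A join node combines the two child entries agreeing on the common bag by addition (sum) or maximum (bottleneck), which does not double-count because the two subtrees share only the bag. The answer is read off at the empty root bag, where $V_t\setminus X_t=V$. Correctness follows by a standard bottom-up induction on $\mathcal{T}$ using the local-feasibility characterization, and the running time is $O^*(k^{tw+1})$ since each of the $O(tw\cdot n)$ nodes has at most $k^{tw+1}$ entries and every transition (the join being the costliest) is computed in $k^{O(tw)}\cdot n^{O(1)}$ time.

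I expect the only genuinely non-trivial point to be justifying that the concurrency constraints, which couple completion times through the processing times $p_i,p_j$ rather than through mere equality of colors, still decompose along the tree decomposition, so that global feasibility is equivalent to local feasibility in every bag. Once this is in hand, both objectives are \emph{regular} (non-decreasing in each $C_j$) and are respectively additively and bottleneck-separable over jobs, so the forget- and join-node recurrences are clearly valid and the remaining verification is mechanical.
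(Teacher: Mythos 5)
Your proposal is correct and follows essentially the same route as the paper's proof: a nice tree decomposition, tables indexed by completion-time assignments to the bag with $k^{tw+1}$ entries per node, and the standard leaf/introduce/forget/join recurrences adapted to the additive ($\sum w_jT_j$) and bottleneck ($WL_{\max}$) objectives. The only cosmetic difference is that you fold a job's contribution in at its forget node (so join nodes add without correction), whereas the paper charges it at the leaf/introduce node and subtracts the bag's contribution at join nodes; your explicit local-feasibility check for overlapping processing intervals within a bag is a welcome clarification of a step the paper leaves implicit.
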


\begin{proof}
Let $G=(\{1,\ldots,n\},E)$ be a conflict graph with $tw(G)=tw$. Using Korhonens approximation algorithm for tree decompositions~\cite{Korhonen21}, we can compute in $2^{O(tw)} \cdot n$ time a tree decomposition of width $O(tw)$ and $O(tw \cdot n)$ nodes for $G$. We then convert the tree decomposition into a \emph{nice tree decomposition} of equal width and a linear number of nodes~\cite{Kloks93}. A nice tree decomposition $\mathcal{T}=(\mathcal{X},F)$ of $G$ is a rooted tree decomposition where we have four types of nodes: 
\begin{itemize}
\item \emph{Leaf nodes} $X \in \mathcal{X}$ which have no children in $\mathcal{T}$, and are singletons. That is, $X=\{j\}$ for some vertex $j \in \{1,\ldots,n\}$.
\item \emph{Forget nodes} $X \in \mathcal{X}$ which have a single child $X'$ in $\mathcal{T}$ such that $X=X' \setminus \{j\}$ for some $j \in X'$.
\item \emph{Introduce nodes} $X \in \mathcal{X}$ which have a single child $X'$ in $\mathcal{T}$ such that $X=X' \cup \{j\}$ for some $j \in \{1,\ldots.n\} \setminus X'$.
\item \emph{Join nodes} $X \in \mathcal{X}$ which have two children $X'$ and $X''$ in $\mathcal{T}$ such that $X=X'=X''$.
\end{itemize}

According to the above, we proceed by assuming we have a nice tree decomposition $\mathcal{T}=(\mathcal{X},F)$ for $G$. For a node $X$ of $\mathcal{T}$, let $G[X]$ denote the subgraph of $G$ induced by $X$, and let $G[\mathcal{T}(X)]$ denote the subgraph of $G$ induced by $X$ and all of its descendants in $\mathcal{T}$. The algorithms we describe both compute dynamic program tables $DP_X$ for each node $X$ of $\mathcal{T}$ in bottom-up fashion, which have an entry for each possible schedule (coloring) $C: X \to \{1,\ldots,q\}$ of $X$ with makespan at most $q$. For a given schedule $C$, the entry $DP_X[C]$ will equal the value (which differs according to whether $WL_{\max}$ or $\sum w_j T_j$ is the objective function) of an optimal schedule $C'$ for all jobs in $G[\mathcal{T}(X)]$ such that $C'(j)=C(j)$ for all $j \in X$. If there is no such feasible schedule $C'$, we set $DP_X[C] = \infty$.  

The dynamic programming computation is done in bottom-up fashion on $\mathcal{T}$ as follows:
\begin{itemize}
\item Suppose $X$ is a leaf node with $X=\{j\}$ for some job $j \in \{1,\ldots,n\}$. Then for any schedule $C: \{j\} \to \{1,\ldots,k\}$ with $C(j) < r_j+ p_j$, we set $DP_X[C]=\infty$. For a schedule $C$ with $C(j) \geq r_j+p_j$ we set $DP_X[C]=w_j \cdot (C(j) - d_j)$ in the case of the $WL_{\max}$ objective, and $DP_X[C]=w_j \cdot \max\{0,C(j) - d_j\}$ in the case of the $\sum w_j T_j$ objective.
\item Suppose $X$ is forget node with child $X'$, where $X=X' \setminus \{j\}$ for some $j \in X'$. Then for any schedule $C: X \to \{1,\ldots,k\}$, we set $DP_X[C] = \min_{C'} DP_{X'}[C']$, where the minimum is taken over all extensions~$C'$ of $C$ onto $X'$. 
\item Suppose $X$ is an introduce node with child $X'$, where $X=X' \cup \{j\}$ for some $j \in \{1,\ldots,n\} \setminus X'$. Let $C: X \to \{1,\ldots,k\}$ be any schedule for $X$, and let $C'$ be the restriction of $C$ onto $X'$. If $C(j) < r_j + p_j$ we set $DP_X[C]=\infty$. Otherwise, we set $DP_X[C] = \max\{w_j \cdot (C(j)-d_j),DP_{X'}[C']\}$ in the case of the $WL_{\max}$ objective, and $DP_X[C]=w_j \cdot \max\{0,C(j) - d_j\} + DP_{X'}[C']$ in the case of the $\sum w_j T_j$ objective.
\item Suppose $X$ is a join node with children $X'$ and $X''$, where $X=X'=X''$. Then for any schedule $C: X \to \{1,\ldots,k\}$, we set $DP_X[C] = \max\{DP_{X'}[C],DP_{X''}[C]\}$ in the case of the $WL_{\max}$ objective, and $DP_X[C]=DP_{X'}[C]+DP_{X''}[C] - \sum_{j \in X} w_j \cdot \max\{0,C(j) - d_j\}$ in the case of the $\sum w_j T_j$ objective.
\end{itemize}

For both objective functions, the value of the optimal solution is given by taking the minimum entry~$T_X[C]$ for the root node $X$ of $\mathcal{T}$. Correctness of this algorithm follows similar arguments used for proving the correctness of the $k$-coloring algorithm for bounded treewidth graphs~\cite{CyganFKLMPPS15,TelleP1997}. Since computing any entry $DP_X[C]$ can be done in $O(k)$ time, and there are $O^*(k^{tw+1})$ entries in total, the running time of the theorem follows as well. \qed
\end{proof}

Observe that Theorem~\ref{thm:TreewidthDP} above implies that both \threefield{P_n}{conc,r_j \geq 0}{WL_{\max}} and \threefield{P_n}{conc,r_j \geq 0}{\sum w_jT_j} are solvable in $|I|^{O(tw)}$-time, where $|I|$ denotes the size of the input, as we assume that all processing times are given in unary. Hence, the total processing time of all jobs, and the makespan of any schedule, is always polynomial in~$|I|$. 
The dynamic programming algorithm of Theorem~\ref{thm:TreewidthDP} can even be turned into an XALP-membership proof:


\begin{corollary}
\threefield{P_n}{conc,r_j \geq 0}{WL_{\max}} and \threefield{P_n}{conc,r_j \geq 0}{\sum w_jT_j} are both in XALP with treewidth as parameter.     
\label{corollary:inXALP}
\end{corollary}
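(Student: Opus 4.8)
The plan is to convert the bottom-up dynamic programming algorithm of Theorem~\ref{thm:TreewidthDP} into a nondeterministic stack-based computation that witnesses membership in XALP. Recall that a problem is in XALP if it can be solved by a nondeterministic Turing machine running in $f(k)\cdot n^{O(1)}$ time and $f(k)\cdot \log n$ space, equipped with an auxiliary stack on which only the top element may be inspected. The key conceptual point is that the tree-decomposition DP traverses $\mathcal{T}$ and, at each node, only needs the table entries of its children; the stack is exactly the right data structure to hold the partial information along a root-to-leaf path during a depth-first traversal, so that we never store more than $O(tw)$ tables' worth of information explicitly in the work tape at any one time.

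Concretely, I would proceed as follows. First I would fix a nice tree decomposition $\mathcal{T}$ of width $O(tw)$ as in Theorem~\ref{thm:TreewidthDP}, and describe a depth-first walk of $\mathcal{T}$ rooted at its root. Rather than computing the full tables $DP_X$, the machine \emph{guesses} a single coloring $C:X\to\{1,\ldots,k\}$ for the current bag $X$ (this requires only $O(tw\cdot\log k)=f(tw)\cdot\log n$ bits, since $k$ is polynomial in $|I|$). As the traversal descends into a child, it pushes the relevant bag-coloring onto the stack, and as it ascends it pops and verifies consistency. Second, I would handle each node type of the nice tree decomposition nondeterministically: at an introduce node the machine checks the release-time feasibility $C(j)\geq r_j+p_j$ and the proper-coloring condition against the already-guessed colors of neighbours in the bag; at a forget node it guesses the color of the forgotten vertex and verifies edges; at a join node it must ensure both children were explored with the \emph{same} guessed coloring of the shared bag, which is where the stack discipline (pushing the bag coloring before recursing into the first child, checking it is restored before recursing into the second) becomes essential. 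The running objective value is accumulated along the way---for $WL_{\max}$ by maintaining a running maximum, and for $\sum w_jT_j$ by maintaining a running sum---and compared against the target threshold at the end.

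The main obstacle I expect is the join node, together with keeping the space budget at $f(k)\cdot\log n$. At a join node the standard DP combines two child tables that agree on the shared bag coloring; in the nondeterministic stack model we cannot store an entire table, so we must arrange the traversal so that the bag coloring at the join is committed once and then faithfully reproduced for both subtrees. The stack-with-top-lookup restriction means I must be careful to push the join bag's coloring (and any running partial objective value that has to be split between the two subtrees, as in the $\sum w_jT_j$ recurrence where the term $\sum_{j\in X} w_j\cdot\max\{0,C(j)-d_j\}$ is subtracted to avoid double counting) in such a way that it can be recovered after the first subtree is fully processed. Once this bookkeeping is set up, verifying that the total nondeterministic running time is $f(tw)\cdot n^{O(1)}$ and the work-tape space is $f(tw)\cdot\log n$ is routine, since each bag contributes $O(tw\cdot\log k)$ bits and the number of nodes of $\mathcal{T}$ is linear. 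Correctness reduces to the observation that an accepting computation corresponds exactly to a globally consistent feasible schedule of value at most the threshold, which follows from the same connectivity property of tree decompositions that underlies the correctness of the DP in Theorem~\ref{thm:TreewidthDP}.
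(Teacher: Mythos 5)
Your proposal is correct and matches the paper's proof in all essentials: both replace the full DP tables of Theorem~\ref{thm:TreewidthDP} by a single nondeterministically guessed entry per bag, use the auxiliary stack to carry pairs consisting of a bag coloring and a partial objective value during a traversal of the nice tree decomposition, and enforce agreement of the shared bag coloring at join nodes while keeping only $O(1)$ such pairs (hence $f(tw)\cdot\log n$ bits) on the work tape. The only cosmetic difference is that the paper runs a postorder, bottom-up traversal in which each completed subtree leaves exactly one pair on the stack and a join node pops two pairs and checks that their colorings coincide, whereas you descend top-down and commit the join bag's coloring before visiting either subtree; both disciplines respect the top-lookup-only stack restriction and yield the same resource bounds.
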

\begin{proof}
We can turn the dynamic programming algorithm from Theorem~\ref{thm:TreewidthDP} into an XALP-membership proof. The argument is similar to XALP-membership proofs in Bodlaender \emph{et al.}~\cite{BodlaenderGJPP22}. Instead of building all tables, we non-deterministically guess an element from a table, working bottom-up. This can be done by performing a postorder traversal of the tree. After handling a node $X$, we push a `guessed' coloring $C$ of $X$  on the stack, along with a computed value~$DP_X[C]$. When handling an introduce or forget node, we pop one element from the stack; and when handling a join node, we pop two elements from the stack. In a leaf node, we guess $C$ and compute $DP_X[C]$ as given. In an introduce node, we obtain from the stack a pair consisting of~$C'$ and a value $DP_X[C']$, then guess~$C(j)$ non-deterministically, and use the given formula to obtain~$DP_X[C']$. In a forget node, we obtain from the stack the pair $C'$ with a value $DP_X[C']$, and set the new pair to be the projection to $X$. The computation in a join node follows the given formulas in the proofs above. Note that if we correctly guess all values~$C(j)$ in some optimal solution, then this computation gives the correct answer. At each point, we use $f(tw) \cdot \log n$ space outside the stack: $O(\log n)$ bits for $O(1)$ positions in the tree, and at most three pairs $(C,DP_X[C])$.  
\qed
\end{proof}


\section{Bounded Processing Times}
\label{section:boundedprocessingtimes}

In the following section we focus on cases where job processing times are either unit, or relatively small, \emph{i.e.} \threefield{P_n}{conc, p_j=1}{\ast} and \threefield{P_n}{conc, p_j \leq p}{\ast} for some fixed $p \in O(1)$. Previous work has shown that for the $\sum C_j$ objective, there always exists an optimal schedule $C$ with makespan at most $tw(G) \cdot \log n$ for the unit processing time case~\cite{Jansen97}, and $tw(G) \cdot p \log n$ for the $p_j \leq p$ case~\cite{HalldorssonK02}. In what follows we generalize and unify these results into our framework. For this, we first introduce the notion of minimal feasible schedules.  

\begin{definition}
A feasible schedule $C$ is said to be \emph{minimal} for a set of jobs $\{1,\ldots,n\}$ if decreasing the completion time of any job by any amount of time causes $C$ to become infeasible.     
\end{definition}


We begin by considering the case of unit processing times. Note that this case is equivalent to the case of $p_j=p$ for all jobs $j$, since one can simply multiply the completion times in any feasible schedule for a \threefield{P_n}{conc, p_j=1}{\ast} instance by~$p$. We will heavily rely on the notion of a \emph{Grundy number} of a graph: A \emph{greedy coloring} of a graph $G=(V,E)$ is a proper coloring $C: \{1,\ldots,n\} \rightarrow \mathbb{N}$ such that for every vertex~$j$ and any color $c \in \{1,\ldots,C(j)-1\}$ there is some vertex $i$ with $\{i,j\} \in E$ and $C(i)=c$. The \emph{value} of a greedy coloring $C$ of $G=(V,E)$ is $C_{\max}=\max_{v\in V} C(v)$. The \emph{Grundy number} of $\Gamma(G)$ of a graph~$G$ is the maximum value of any greedy coloring of $G$. From \cite{TelleP1997}, we have:
\begin{lemma}[\cite{TelleP1997}]
\label{lem:Grundy}
$\Gamma(G) \leq tw(G) \cdot \log n + 1$ for any graph $G$ on $n$ vertices.    
\end{lemma}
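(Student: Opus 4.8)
The plan is to bound the Grundy number by analyzing any greedy coloring and showing that if it uses many colors, then the graph must contain a large structure incompatible with small treewidth. The key observation about greedy colorings is the following: if a vertex $j$ receives color $c$ in a greedy coloring $C$, then by definition $j$ must have neighbors colored with each of $1,\ldots,c-1$. This means that in the subgraph induced by vertices of ``high'' color, every vertex of color $c$ has at least $c-1$ distinctly-colored neighbors, so vertices with large colors have high degree within the relevant subgraphs.

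First I would set up a ``peeling'' argument on color classes. Fix a greedy coloring $C$ and let $\Gamma = C_{\max}$. For each threshold $t$, consider the set $V_t$ of vertices with color at least $t$. Every vertex $v \in V_t$ with color $c(v)$ has neighbors of every color in $\{1,\dots,c(v)-1\}$, and in particular has at least $t-1$ neighbors with colors in $\{1,\dots,t-1\}$ — but more usefully, within any induced subgraph containing the ``top'' colors, high-color vertices retain many neighbors. The standard way to connect this to treewidth is: a graph of treewidth $tw$ has the property that every subgraph has a vertex of degree at most $tw$ (since subgraphs also have treewidth at most $tw$, and bounded-treewidth graphs are $tw$-degenerate). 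So in any induced subgraph, the minimum degree is at most $tw$.

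The main step I would carry out is a counting argument exploiting $tw$-degeneracy. Consider the vertices colored with the top $\Gamma/2$ colors (say colors $\lceil \Gamma/2\rceil,\dots,\Gamma$). Each such vertex has at least $\Gamma/2 - 1$ neighbors among vertices of \emph{strictly smaller} color, hence among vertices colored in the bottom half $\{1,\dots,\lceil\Gamma/2\rceil -1\}$. More precisely, I would show inductively that the number of vertices colored $\geq t$ grows at least geometrically as $t$ decreases, because each high-color vertex forces the existence of distinctly-colored neighbors at every lower color. Concretely, letting $n_t = |V_t|$, the degeneracy bound says that the edges between consecutive ``levels'' are limited, forcing $n_{t-1}$ to be at least roughly a $(1 + 1/tw)$ factor larger than $n_t$; iterating over the $\Gamma$ color levels yields $n \geq (1+1/tw)^{\Gamma}$ roughly, which upon taking logarithms gives $\Gamma \leq tw \cdot \log n + O(1)$. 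The cleanest version is to argue that the smallest color class (or the set of top colors) is ``small'' relative to the next one by degeneracy, so that the sizes at least double every $tw$ levels, giving $n \geq 2^{\Gamma/tw}$ and hence $\Gamma \leq tw \cdot \log n$.

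The hard part will be nailing the exact constant and the precise geometric recurrence so that the bound comes out as $tw \cdot \log n + 1$ rather than something weaker with extra constant factors. The delicate point is correctly setting up the recurrence between consecutive color thresholds: one must carefully use that each vertex of color $c$ sees \emph{all} of the colors below it (not just one), combined with the fact that the induced subgraph on colors $\{t-1,\dots,\Gamma\}$ has a vertex of degree at most $tw$, to conclude that $|V_{t-1}| \geq |V_t| \cdot (\text{factor})$. I would double-check whether the intended argument instead runs by a clean induction on the number of vertices (a vertex of degree $\leq tw$ exists, remove it, apply induction to get $\Gamma(G - v) \leq tw \log(n-1) + 1$, and argue $v$'s presence can increase the Grundy number by at most the right amount), since that inductive formulation is likely the one yielding the tight $+1$ term, and would be the version I'd ultimately commit to if the direct counting constant proves awkward.
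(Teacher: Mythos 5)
The paper does not actually prove this lemma --- it imports it verbatim from Telle and Proskurowski --- so the only question is whether your sketch stands on its own, and it does not yet. The central quantitative step you propose, a per-level geometric recurrence of the form $|V_{t-1}| \ge (1+1/tw)\,|V_t|$ (equivalently, ``the level sizes double every $tw$ colors''), is false, and it cannot be derived from the two ingredients you cite (the greedy witness property plus $tw$-degeneracy). Counterexample: let $G$ be the join of a clique $\{x_1,\dots,x_{tw}\}$ with an independent set of size $m > tw$. This graph has treewidth exactly $tw$, and the coloring giving $x_i$ color $i$ and every independent vertex color $tw+1$ is a valid greedy coloring. Here $|V_{tw+1}| = m$ while $|V_1| = m + tw < 2m$, so neither the $(1+1/tw)$ per-level growth nor the doubling over $tw$ levels holds. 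The reason the edge count cannot rescue this is that both sides of the inequality scale identically: the witness property gives at least $tw\cdot|V_t|$ edges from $V_t$ down into $V_{t-tw}\setminus V_t$, while degeneracy caps the number of edges of $G[V_{t-tw}]$ at $tw\cdot|V_{t-tw}|$, and combining the two yields only the trivial conclusion $|V_t|\le|V_{t-tw}|$.

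Your fallback --- induction by deleting a vertex of degree at most $tw$ --- also cannot work as stated. First, restricting a greedy coloring of $G$ to $G-v$ need not yield a greedy coloring of $G-v$ (a neighbor of $v$ may lose its unique witness for the color $C(v)$), so even the step $\Gamma(G)\le\Gamma(G-v)+1$ requires an argument. Second, and more fatally, even granting that step the recurrence $\Gamma(n)\le\Gamma(n-1)+1$ only gives $\Gamma\le n$, and $tw\log_2(n-1)+1+1 > tw\log_2 n + 1$ for all large $n$, so the induction does not close; to make it close you would need to delete a constant fraction of the vertices per color level consumed, which is exactly the false geometric claim again. A correct proof requires a more global accounting of the witness structure --- for instance, lower-bounding the number of vertices that a single vertex of color $c$ forces to exist, using the tree decomposition to control how much the recursive witness sets of its $c-1$ designated neighbors can overlap --- rather than a level-by-level degeneracy count. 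As it stands, your proposal identifies the right ingredients but both of the concrete arguments it offers contain steps that are provably not implied by those ingredients.
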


\begin{lemma}
\label{lem:MinimalSchedulesAreGreedy}%
A feasible minimal schedule of any \threefield{P_n}{conc, p_j=1}{\ast} instance corresponds to a greedy coloring of the conflict graph $G$. 
\end{lemma}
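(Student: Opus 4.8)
The plan is to prove the stated correspondence as an equivalence: a feasible schedule of a \threefield{P_n}{\conc, p_j=1}{*} instance is minimal if and only if, viewed as the coloring that assigns each job $j$ the color $C(j)$, it is a greedy coloring of $G$. First I would record the reduction to colorings. Since completion times lie in $\mathbb{N}$ and every job occupies a unit interval $[C_j-1,C_j)$, two conflicting jobs $i,j$ clash exactly when $C_i=C_j$; hence a feasible schedule is precisely a proper coloring whose color classes are indexed by completion time, and $C_j\geq 1$ is forced by the no-start-before-$0$ constraint $C_j-p_j\geq 0$. This lets me talk interchangeably about completion times and colors.

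For the forward direction (minimal $\Rightarrow$ greedy) I would argue by contraposition. Suppose $C$ is feasible but not greedy, so there is a job $j$ and a color $c\in\{1,\ldots,C(j)-1\}$ with no neighbor $i$ of $j$ satisfying $C(i)=c$. Recoloring $j$ to $c$ keeps the schedule feasible — the unit interval $[c-1,c)$ is disjoint from every neighbor's interval because no neighbor uses color $c$, and $c\geq 1$ respects the start constraint — while strictly decreasing $C_j$ by the positive amount $C(j)-c$. This contradicts minimality, so every minimal schedule is greedy. Note this direction only needs to exhibit \emph{one} feasibility-preserving decrease, so it sidesteps any subtlety about non-integer shifts.

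For the converse (greedy $\Rightarrow$ minimal) I would fix a job $j$ with $C(j)=t$ and show that decreasing its completion time by any amount $\delta>0$ destroys feasibility. Writing $u=t-\delta$: if $u<1$ then $u-p_j<0$ already violates the start constraint, so I may assume $u\in[1,t)$. The shifted unit interval $[u-1,u)$ overlaps the integer slot $[c_0-1,c_0)$ with $c_0=\lfloor u\rfloor\in\{1,\ldots,t-1\}$, and the greedy property guarantees a neighbor $i$ of $j$ with $C(i)=c_0$, whose interval $[c_0-1,c_0)$ therefore meets $[u-1,u)$ in the nonempty set $[u-1,c_0)$. This forces a conflict, so no positive decrease is feasible and $C$ is minimal; the boundary cases $t=1$ and $u$ integral fall out as immediate specializations.

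The main obstacle is precisely this converse step: minimality quantifies over decreasing by \emph{any} amount of time, not merely over integer color changes, so I must handle non-integer shifts where the job's interval straddles two integer slots. The half-open-interval bookkeeping above — identifying the lower slot $\lfloor u\rfloor$, verifying it lies in the blocked range $\{1,\ldots,t-1\}$, and checking that the overlap $[u-1,c_0)$ is nonempty — is the crux of the argument. Since the forward direction alone already yields that minimal schedules are greedy, it suffices for combining with Lemma~\ref{lem:Grundy} to bound the makespan by $\Gamma(G)\leq tw(G)\cdot\log n+1$, while the converse rounds out the full correspondence asserted in the lemma.
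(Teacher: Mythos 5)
Your forward direction (minimal $\Rightarrow$ greedy, by contraposition: a missing neighbor color $c<C(j)$ lets you recolor $j$ to $c$, contradicting minimality) is exactly the paper's proof, and it is the only direction the paper proves or needs for Corollary~\ref{cor:MinimalUnitProc}. Your converse (greedy $\Rightarrow$ minimal), including the careful handling of non-integer shifts via the slot $\lfloor u\rfloor$, is correct but goes beyond what the paper establishes; it is harmless extra work.
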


\begin{proof}
Consider a feasible minimal schedule for the set of jobs $\{1,\ldots,n\}$ given in some \threefield{P_n}{conc, p_j=1}{\ast} instance, and let $j \in \{1,\ldots,n\}$. We argue that for every $c \in \{1,\ldots,C(j)-1\}$ there exists some $i \in \{1,\ldots,n\}$ with $\{i,j\} \in E$ and $C(i)=c$. Suppose not. That is, suppose there is some $c \in \{1,\ldots,C(j)-1\}$ such that for all $i \in \{1,\ldots,n\}$ with $\{i,j\} \in E$ we have $C(i)\not=c$. Then, since all processing times are unit, the schedule $C'$ with $C'(j)=c$ and $C'(i)=C(i)$ for all $i \in \{1,\ldots,n\} \setminus \{j\}$ is feasible as well, contradicting our assumption that $C$ is minimal. It follows that $C$ is a greedy coloring of $G$. \qed
\end{proof}

Recall that a scheduling criterion is \emph{regular} if it is non-decreasing in the completion time of the jobs; i.e., decreasing the completion time of any job does not increase the value of the schedule. We thus combine Lemma~\ref{lem:Grundy} and~\ref{lem:MinimalSchedulesAreGreedy} to get:
\begin{corollary}
\label{cor:MinimalUnitProc}%
Any \threefield{P_n}{conc, p_j=1}{\ast} instance where $*$ is regular has an optimal schedule $C$ with $C_{\max} \leq tw \cdot \log n + 1$.
\end{corollary}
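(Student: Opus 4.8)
The plan is to combine Lemmas~\ref{lem:Grundy} and~\ref{lem:MinimalSchedulesAreGreedy} via a standard exchange argument that turns an arbitrary optimal schedule into a minimal one without increasing the objective. First I would take any optimal schedule $C^*$ for the given \threefield{P_n}{conc, p_j=1}{*} instance, which exists because the objective is to be optimised over a nonempty finite set of feasible schedules (every graph admits a proper coloring). The key observation is that $C^*$ can be transformed into a \emph{minimal} feasible schedule $C$ by repeatedly decreasing the completion time of some job whenever doing so keeps the schedule feasible. Since completion times are positive integers and each such step strictly decreases $\sum_j C^*_j$, this process terminates, yielding a minimal feasible schedule $C$.

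The next step is to argue that this transformation does not worsen the objective value. This is exactly where regularity enters: a regular criterion is non-decreasing in the completion times, so decreasing any $C^*_j$ cannot increase the value of the schedule. Hence the minimal schedule $C$ obtained at the end satisfies $\mathrm{value}(C) \leq \mathrm{value}(C^*)$, and since $C^*$ was optimal, $C$ is optimal as well. In other words, among all optimal schedules there is always one that is minimal.

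Finally I would apply the two lemmas to this optimal minimal schedule $C$. By Lemma~\ref{lem:MinimalSchedulesAreGreedy}, $C$ corresponds to a greedy coloring of the conflict graph $G$, so its makespan $C_{\max}$ is at most the Grundy number $\Gamma(G)$. By Lemma~\ref{lem:Grundy}, $\Gamma(G) \leq tw(G) \cdot \log n + 1$. Chaining these inequalities gives $C_{\max} \leq tw(G)\cdot \log n + 1$, as claimed.

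I expect the main subtlety to be the termination-and-optimality argument rather than the final chaining, which is immediate. One must be careful that the greedy "push down" process is well-defined: decreasing a single job's completion time to an unused-by-neighbors value preserves feasibility (since $p_j=1$, only equality of completion times matters for a conflict), and this is precisely the mechanism in the proof of Lemma~\ref{lem:MinimalSchedulesAreGreedy}. One should also note that minimality is defined per-job, so it suffices to iterate single-job decreases until no job can be moved earlier; regularity guarantees each such move is non-harmful. No renaming of colors or global recoloring is needed, which keeps the argument clean.
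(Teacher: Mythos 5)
Your proposal is correct and follows exactly the route the paper intends: the paper states the corollary as an immediate combination of Lemma~\ref{lem:Grundy} and Lemma~\ref{lem:MinimalSchedulesAreGreedy}, leaving implicit the exchange/termination argument (push completion times down until the schedule is minimal, with regularity guaranteeing no loss in objective value) that you spell out explicitly. Your elaboration of that implicit step is accurate, so nothing is missing.
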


We now discuss the tightness or non-tightness of the bound of Corollary~\ref{cor:MinimalUnitProc} for the different objective functions.

The bound of the Corollary~\ref{cor:MinimalUnitProc} is clearly not tight for the makespan objective, as there we have $C_{\max} \leq tw(G)+1$: we can vertex color $G$ with $tw(G)+1$
colors, and schedule each job colored $i$ at time $i$.
However, for $\sum C_j$ it is known to be almost tight in the sense that there are instances of \threefield{P_n}{conc, p_j=1}{\sum C_j} where $G$ is a tree (\emph{i.e.} $tw(G)=1$) and any optimal schedule $C$ has $C_{\max} = \Omega(\log n)$ \cite{KubickaS89}. As $\sum C_j$ is a special case of $\sum T_j$, $\sum T_j$, and $\sum w_jT_j$, this also applies to these objective functions as well. In the two lemmas below we show an even stronger bound for the remaining $WC_{\max}$ and $L_{\max}$ objectives.  

\begin{lemma}
\label{lem:treeWCmax}%
There exists a \threefield{P_n}{conc, p_j=1}{WC_{\max}} instance where $tw(G)=1$ and any optimal schedule has $C_{\max} = \log n + 1$.
\end{lemma}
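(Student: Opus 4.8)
The plan is to realize the $\Omega(\log n)$ lower bound through a weighted \emph{binomial tree}. Define $T_0$ to be a single vertex and, for $k \ge 1$, let $T_k$ be the tree whose root $\rho_k$ is joined to the roots of one fresh copy each of $T_0, T_1, \dots, T_{k-1}$. Then $T_k$ has $n = 2^k$ vertices and treewidth $1$, and every vertex is the root of a unique embedded copy of some $T_i$; call $i$ its \emph{level}. I would set a target value $W^{*} = \operatorname{lcm}(1, 2, \dots, k+1)$, which is an integer bounded by $n^{O(1)}$ (so it is representable in unary), and assign to every level-$i$ vertex the integer weight $w = W^{*}/(i+1)$. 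This is the instance of \threefield{P_n}{conc, p_j=1}{WC_{\max}} I would analyze.

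First I would exhibit the \emph{cascade} schedule that places every level-$i$ vertex at time $i+1$. I would check it is a proper coloring: a level-$i$ vertex is adjacent only to its parent (of strictly larger level, hence a strictly larger time) and to its children (of levels $0, \dots, i-1$, hence times $1, \dots, i$), so no conflicting job reuses time $i+1$. Each vertex then contributes $w \cdot C = \bigl(W^{*}/(i+1)\bigr)(i+1) = W^{*}$, so this schedule is feasible with $WC_{\max} = W^{*}$ and makespan $k+1$; in particular the optimum is at most $W^{*}$.

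The heart of the argument is a forcing lemma: \emph{in every feasible schedule $C$ with $WC_{\max} \le W^{*}$, each level-$i$ vertex is scheduled at time exactly $i+1$}, which I would prove by induction on the level $i$. The weight cap gives the upper bound immediately, since $w \cdot C(v) \le W^{*}$ forces $C(v) \le W^{*}/w = i+1$. For the matching lower bound I would use that the children of a level-$i$ vertex have the $i$ distinct levels $0, \dots, i-1$; by the induction hypothesis they occupy the $i$ distinct times $1, \dots, i$, so properness forbids $v$ from every time in $\{1, \dots, i\}$ and hence $C(v) \ge i+1$. Applying this to the root $\rho_k$ (level $k$) shows $C(\rho_k) = k+1$, so every schedule attaining value at most $W^{*}$ has makespan at least $k+1$; conversely the forcing lemma bounds every completion time by $i+1 \le k+1$, so the makespan is exactly $k+1 = \log n + 1$. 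Finally, any schedule with $WC_{\max} < W^{*}$ would still satisfy the hypothesis of the forcing lemma yet give $\rho_k$ the contribution $w_{\rho_k}(k+1) = W^{*}$, a contradiction; hence the optimum equals $W^{*}$ and every optimal schedule has makespan $\log n + 1$.

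The step I expect to be the main obstacle is the lower bound inside the forcing lemma. The siblings feeding a given vertex are pairwise \emph{non-adjacent}, so adjacency alone cannot separate their colors; what does the separating is the interaction between the weight caps and the recursive structure, and the induction must be arranged so that these caps propagate cleanly from the leaves upward. A secondary point to get right is the arithmetic of the weights: choosing $W^{*} = \operatorname{lcm}(1,\dots,k+1)$ keeps every $w = W^{*}/(i+1)$ an integer while keeping $W^{*}$ polynomial in $n$ (so the unary encoding is legitimate), and makes the cascade value coincide exactly with the weight caps, which is what simultaneously pins down optimality and the makespan.
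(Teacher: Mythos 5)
Your proof is correct, and it uses the same binomial-tree skeleton as the paper but a genuinely different weighting, which turns out to matter. The paper's construction puts the large weight $W$ only on the leaf copies of the single-vertex tree and weight $1$ everywhere else, and then asserts by induction that the root of each subtree $T_i$ is forced to complete at time \emph{exactly} $i$ whenever $WC_{\max}\le W$. But a weight-$1$ vertex is only capped at time $W$, so nothing pins its completion time from above, and the claimed induction does not close: already the root of a $T_2$-copy may be scheduled at time $3$ instead of $2$, which frees up time $2$ for its parent. Indeed, for the paper's $T_4$ (eight vertices, so the lemma would demand makespan $4$) one can schedule all weight-$W$ leaves at time $1$ and the remaining four weight-$1$ vertices at times $2,3,3,2$ (inner $T_2$-root at $2$, $T_3$-root at $3$, outer $T_2$-root at $3$, global root at $2$), giving an optimal schedule of value $W$ with makespan $3$. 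Your graded weights $w = W^{*}/(i+1)$ with $W^{*}=\operatorname{lcm}(1,\dots,k+1)$ are exactly what is needed to repair this: the cap $w\cdot C\le W^{*}$ supplies the missing upper bound $C\le i+1$ at \emph{every} level, the children supply the lower bound, and the two together make the forcing induction airtight; the choice of $W^{*}$ as an lcm keeps all weights integral and, since $\operatorname{lcm}(1,\dots,k+1)\le 4^{k+1}=O(n^2)$, consistent with the paper's unary-encoding convention. So your argument is not merely an alternative route --- it is the one that actually establishes the lemma, and the paper's proof should be amended along these lines (or at least its induction hypothesis strengthened to cap every vertex, which is precisely what your weight assignment achieves).
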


\begin{proof}
Let $W \geq \log n + 1$. We construct a sequence of rooted conflict trees $T_1, T_2,\ldots$ as follows: $T_1$ has a single node (job) with weight $W$ which is also its root. For $i > 1$, tree~$T_i$ is obtained by creating a new root job with weight 1, and connecting this job to the roots of trees $T_1,\ldots,T_{i-1}$. Let $|T_i|$ denote the number of jobs in $T_i$. Then $|T_i|=|T_1|+\cdots+|T_{i-1}|+1=2^{i-1}$. Moreover, it is not difficult to show by induction on~$i$ that the completion time of the root of $T_i$ is exactly $i$ in any schedule with $WC_{\max} \leq W$ (\emph{i.e.} any schedule where vertices of weight $W$ complete at time 1). The proof now follows by taking the \threefield{P_n}{conc, p_j=1}{WC_{\max}} instance corresponding to tree~$T_{\log n+1}$.
\qed
\end{proof}

\begin{lemma}
There exists a \threefield{P_n}{conc, p_j=1}{L_{\max}} instance where $tw(G)=1$ and any optimal schedule has $C_{\max} = \log n + 1$.
\end{lemma}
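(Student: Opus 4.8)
The plan is to mirror the construction in the proof of Lemma~\ref{lem:treeWCmax}, replacing weights by due dates. I would reuse the same family of rooted conflict trees $T_1, T_2, \ldots$ (with $T_1$ a single node and $T_i$ obtained by attaching a fresh root to the roots of copies of $T_1, \ldots, T_{i-1}$, so that $|T_i| = 2^{i-1}$ and $tw = 1$), and take the instance corresponding to $T_{\log n + 1}$, which has exactly $n$ jobs. Instead of putting a large weight on the ``leaf'' jobs, I would assign a due date to every job: the root of each copy of $T_j$ (at any level of the recursion) receives due date $d = j$; in particular every leaf, being the root of a copy of $T_1$, receives due date $1$. The objective $L_{\max}$ will play the role that $WC_{\max}$ played before, with the due date $j$ enforcing $C_\rho \le j$ on the corresponding root $\rho$ in any schedule of lateness $0$.

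First I would pin down the optimal value. Since jobs have unit processing time and release time $0$, every leaf completes at time $\ge 1$, and as each leaf has due date $1$ its lateness is $\ge 0$; hence $L_{\max} \ge 0$ for every feasible schedule. Conversely, the schedule that places the root of each copy of $T_j$ at time exactly $j$ is feasible and has $L_j = 0$ for all $j$, so the optimum is exactly $L_{\max} = 0$. The heart of the argument is then the following claim, proved by strong induction on $j$: in any optimal schedule (i.e.\ any schedule with $L_{\max} = 0$), the root of every copy of $T_j$ completes at time exactly $j$. For the base case, a leaf has due date $1$, so $L_{\max} = 0$ forces its completion time to be $\le 1$, and it is $\ge 1$, hence exactly $1$. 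For the inductive step, the root $\rho$ of a copy of $T_j$ is adjacent to the roots of its child copies $T_1, \ldots, T_{j-1}$, which by the induction hypothesis complete at the distinct times $1, 2, \ldots, j-1$; thus $\rho$ cannot reuse any of these times and completes at time $\ge j$, while its due date $j$ together with $L_{\max} = 0$ forces it $\le j$, so it completes at exactly $j$. Applying the claim to the root of $T_{\log n + 1}$ shows it completes at time $\log n + 1$, and since every job completes by its due date, which is at most $\log n + 1$, every optimal schedule has $C_{\max} = \log n + 1$, as required.

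I expect the only delicate point to be the lower bound in the inductive step, namely forcing the parent root to completion time $\ge j$. Unlike the $WC_{\max}$ setting, where a single global bound suffices, here it is important that I impose the tight due date $j$ at \emph{every} level of the recursion: this is exactly what guarantees that the child roots occupy the full set of distinct times $\{1, \ldots, j-1\}$ rather than merely completing somewhere after time $1$. Had I only forced the leaves to time $1$ and left the internal roots unconstrained, the child roots could collide on a few early times and the parent could be scheduled earlier than $j$, breaking the bound; pinning each intermediate root through its own due date is what makes the induction go through.
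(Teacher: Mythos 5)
Your proof is correct and follows essentially the same route as the paper's: the identical recursive tree construction with due date $i$ on the root of each copy of $T_i$, and the induction forcing each such root to complete at exactly time $i$ under $L_{\max}=0$. You simply spell out the induction and the verification that the optimum equals $0$, which the paper leaves implicit.
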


\begin{proof}
The proof is similar to the proof of Lemma~\ref{lem:treeWCmax}. We construct a sequence of rooted conflict trees as follows: $T_1$ has a single job with due date 1 which is also its root. For $i > 1$, the tree~$T_i$ is obtained by creating a new root job with due date~$i$, and connecting this job to the roots of trees $T_1,\ldots,T_{i-1}$. Again we have that $|T_i|=2^{i-1}$, and any optimal schedule (with $L_{\max}=0$) must schedule  the root of $T_i$ to complete at time~$i$. The proof then follows by taking the \threefield{P_n}{conc, p_j=1}{L_{\max}} instance corresponding to tree~$T_{\log n+1}$.
\qed
\end{proof}

Note that all scheduling criteria discussed in this paper are regular. Thus, by using the algorithm of Theorem~\ref{thm:TreewidthDP}, we get an algorithm for all these criteria in the unit processing-times case. We state this for the two most general criteria:
\begin{theorem}
\label{thm:UnitProc}%
The \threefield{P_n}{conc,p_j=1}{WL_{\max}} and \threefield{P_n}{conc,p_j=1}{\sum w_jT_j} problems are both solvable in $O^*((tw \cdot \log n+1)^{tw+1})$-time.    
\end{theorem}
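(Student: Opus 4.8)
The plan is to obtain the theorem as a direct composition of Corollary~\ref{cor:MinimalUnitProc} (the makespan bound) and Theorem~\ref{thm:TreewidthDP} (the tree decomposition dynamic program). First I would verify that both objective functions are regular, i.e.\ non-decreasing in the completion times of the jobs. For $WL_{\max}=\max_j w_j(C_j-d_j)$ this is immediate: each weight $w_j$ is a nonnegative integer, so increasing any $C_j$ can only increase the corresponding term $w_j(C_j-d_j)$ and hence the maximum. For $\sum w_j T_j=\sum_j w_j \max\{0,C_j-d_j\}$ the same holds termwise, so the sum is non-decreasing in every $C_j$ as well. Thus both criteria fall within the scope of Corollary~\ref{cor:MinimalUnitProc}.

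Since we are in the unit processing time, zero release time setting \threefield{P_n}{conc, p_j=1}{*}, Corollary~\ref{cor:MinimalUnitProc} then guarantees that both problems admit an optimal schedule $C$ with $C_{\max} \leq tw \cdot \log n + 1$. I would next invoke Theorem~\ref{thm:TreewidthDP}, which solves the more general \threefield{P_n}{conc,r_j \geq 0}{WL_{\max}} and \threefield{P_n}{conc,r_j \geq 0}{\sum w_jT_j} problems in $O^*(k^{tw+1})$ time whenever some optimal schedule has makespan at most $k$. Setting $k = tw \cdot \log n + 1$ immediately yields the claimed running time of $O^*((tw \cdot \log n + 1)^{tw+1})$.

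The only point requiring a little care is the interface between the two prior results. Corollary~\ref{cor:MinimalUnitProc} is stated for the $r_j=0$ case; its proof, via minimal schedules and Lemma~\ref{lem:MinimalSchedulesAreGreedy}, relies on being able to move a job to any earlier unused color, which release times would obstruct. Theorem~\ref{thm:TreewidthDP}, on the other hand, is phrased with $r_j \geq 0$. Because $r_j=0$ is a special case of $r_j \geq 0$, the dynamic program applies directly, and the makespan bound we feed into it is exactly the quantity that Corollary~\ref{cor:MinimalUnitProc} certifies. No genuine obstacle remains: the argument is a straightforward composition, and the remaining bookkeeping amounts only to confirming that $k=tw \cdot \log n + 1$ is a legitimate input to the algorithm of Theorem~\ref{thm:TreewidthDP} and produces the stated bound.
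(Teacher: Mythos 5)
Your proposal is correct and follows exactly the paper's (implicit) argument: observe that $WL_{\max}$ and $\sum w_jT_j$ are regular, apply Corollary~\ref{cor:MinimalUnitProc} to bound the makespan of some optimal schedule by $tw\cdot\log n+1$, and feed this bound as $k$ into the dynamic program of Theorem~\ref{thm:TreewidthDP}. Your extra care at the interface (noting that $r_j=0$ is a special case of $r_j\geq 0$, so the DP applies) is a sensible addition but does not change the route.
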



\paragraph{Varying processing times}

We next consider the case where the processing times of jobs may vary, but are still rather small. We denote this case by \threefield{P_n}{conc, p_j \leq p}{\ast}. We extend the bound of Corollary~\ref{cor:MinimalUnitProc} to this case as well, exploiting the connection between treedepth and treewidth, although at the price of being slightly less precise.

For the proof of Lemma~\ref{lem:3} below, we will need to first define the notion of \emph{treedepth} of a graph. The treedepth $td(G)$ of a (connected) graph $G=(\{1,\ldots,n\},E)$ is the minimum height of any rooted tree $T$ over $\{1,\ldots,n\}$ with the property that for every $\{i,j\} \in E$ we have that $i$ and $j$ have an ancestor-descendant relationship to each other in $T$. We will use the following known relationship between the treewidth and the treedepth parameters of a graph:

\begin{lemma}[\cite{BodlaenderGHK1995}]
\label{lem:TreedepthTreewdith}
$td(G) = O(tw(G) \cdot \log n)$ for any graph $G$ on $n$ vertices.    
\end{lemma}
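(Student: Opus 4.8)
The plan is to construct a bounded-height elimination tree for $G$ by recursively peeling off balanced separators, exploiting the fact that bounded-treewidth graphs admit small balanced separators. Recall that $td(G)$ is the minimum height of a rooted tree on $\{1,\ldots,n\}$ in which the endpoints of every edge of $G$ stand in ancestor--descendant relation; it therefore suffices to \emph{exhibit one} such tree of height $O(tw \cdot \log n)$.

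The first ingredient is the standard balanced-separator property of tree decompositions: given a tree decomposition of $G$ of width $tw$, there is a bag $S$ with $|S| \le tw+1$ such that every connected component of $G - S$ has at most $n/2$ vertices. I would obtain this by assigning each vertex to a node of the decomposition tree and orienting every tree edge toward its heavier side; a sink node $t$ of this orientation yields the bag $S = X_t$, since every component of $G - S$ then lives entirely inside one hanging branch, and each branch carries at most half the vertices.

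Next I would build the elimination tree recursively. Arrange the vertices of $S$ as a single descending path $P$ of length $|S| \le tw+1$. For each connected component $C$ of $G - S$, recurse on the induced subgraph $G[C]$ (which has at most $n/2$ vertices and treewidth at most $tw$, by restricting the decomposition) to get an elimination tree, and attach its root as a child of the bottom vertex of $P$. The result is a valid elimination tree: edges inside a component are handled by the recursion; edges with both endpoints in $S$ lie on $P$; edges between $S$ and some $C$ have their $S$-endpoint on $P$, which sits above the whole of $C$ and is thus an ancestor; and no edge joins two distinct components, as $S$ separates them.

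Finally I would solve the height recurrence. Writing $h(n)$ for the height produced on an $n$-vertex graph of treewidth at most $tw$, the construction gives $h(n) \le (tw+1) + h(n/2)$ with $h(1) = 1$, which unrolls to $h(n) \le (tw+1)(\log n + 1) = O(tw \cdot \log n)$, and hence $td(G) \le h(n) = O(tw \cdot \log n)$. The main obstacle is the balanced-separator step: one must verify that a single bag of size $tw+1$ (rather than an arbitrary separator) splits the graph into pieces of at most half the size, which is precisely where the tree-decomposition structure is used; the remaining construction and recurrence are a routine induction.
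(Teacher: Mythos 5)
Your proof is correct: the recursive balanced-separator construction (peel off a bag of size at most $tw+1$ that splits the graph into components of at most $n/2$ vertices, stack it as a path, and recurse) is exactly the classical argument for this bound, and it is the argument given in the cited source~\cite{BodlaenderGHK1995}; the paper itself states the lemma by citation only and offers no proof of its own. All the steps you flag as needing care (existence of a single balancing bag via the edge-orientation/sink argument, validity of the elimination tree, and the recurrence $h(n)\le (tw+1)+h(n/2)$) go through as you describe.
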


\begin{lemma}
There exists a feasible minimal schedule $C$ for any \threefield{P_n}{conc, p_j\leq p}{\ast} instance such that $C_{\max} =O(tw(G) \cdot p \log n)$.
\label{lem:3}
\end{lemma}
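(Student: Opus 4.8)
The plan is to build an explicit feasible schedule of the desired makespan from a treedepth decomposition, and only then compress it to a minimal one. First I would invoke Lemma~\ref{lem:TreedepthTreewdith} to fix a treedepth decomposition of $G$, i.e.\ a rooted tree (or forest, if $G$ is disconnected) $T$ over $\{1,\ldots,n\}$ of height $d = td(G) = O(tw(G)\cdot \log n)$ such that every edge of $G$ joins a pair of vertices in ancestor--descendant relation. Writing $\delta(j)\in\{1,\ldots,d\}$ for the depth of job $j$ in $T$, the defining property of $T$ guarantees $\delta(i)\neq\delta(j)$ whenever $\{i,j\}\in E$.

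The key idea is to turn the classical ``$td$-coloring by depth'' into a schedule by assigning each depth its own time \emph{band} of width $p$. Concretely, I would place job $j$ in the half-open interval $[(\delta(j)-1)p,\,(\delta(j)-1)p+p_j)$, so that $C_j=(\delta(j)-1)p+p_j$. Since $p_j\le p$, the whole of $j$'s processing interval is contained in the band $[(\delta(j)-1)p,\,\delta(j)p)$, and distinct depths give disjoint bands. Hence any two conflicting jobs $i,j$, living at distinct depths, occupy disjoint bands and their processing intervals cannot overlap; together with $C_j-p_j=(\delta(j)-1)p\ge 0$ this shows the schedule is feasible. Its makespan is at most $d\cdot p=O(tw(G)\cdot p\log n)$, which is the bound we want --- the only loss of precision relative to the unit case entering through the constant in Lemma~\ref{lem:TreedepthTreewdith}.

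Call this constructed schedule $C^0$. It need not be minimal, so the final step is to compress it: among all integer feasible schedules $C$ with $C(j)\le C^0(j)$ for every $j$ --- a finite, non-empty set containing $C^0$ --- I would take one that is componentwise minimal. Such a $C$ is minimal in the sense of the definition above: decreasing a single $C(j)$ while keeping the rest fixed yields a schedule still dominated by $C^0$, so were it feasible it would contradict componentwise minimality of $C$, and must therefore be infeasible. Integrality lets us pass from ``cannot shift left by any positive amount'' to ``cannot shift left by one unit'': all interval endpoints are integers, so a neighbour blocking an integer shift already blocks every positive shift. Since compression only decreases completion times, $C_{\max}\le C^0_{\max}=O(tw(G)\cdot p\log n)$, proving the lemma.

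The construction and its feasibility check are straightforward; the step that needs the most care is the last one, where one must argue that a componentwise-minimal feasible schedule below $C^0$ genuinely satisfies the ``no infinitesimal left shift'' definition of minimality. It is worth emphasizing that minimality by itself does \emph{not} force a small makespan --- on a path with varying $p_j$ one can build a ``staircase'' minimal schedule of makespan $\Theta(np)$ --- which is precisely why the proof must first exhibit the good schedule $C^0$ and only afterwards minimize, rather than reasoning about an arbitrary minimal schedule as in Lemma~\ref{lem:MinimalSchedulesAreGreedy}.
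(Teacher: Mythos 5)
Your construction is correct and does prove the lemma as literally stated: scheduling each job $j$ inside the time band $[(\delta(j)-1)p,\delta(j)p)$ determined by its depth in a treedepth decomposition gives a feasible schedule $C^0$ of makespan at most $td(G)\cdot p=O(tw(G)\cdot p\log n)$, and passing to a componentwise-minimal feasible integer schedule dominated by $C^0$ yields a minimal schedule of no larger makespan. (Your integrality remark is stated a bit loosely --- the precise point is that the set of left-shifts of $j$ blocked by a single neighbour $i$ is an open interval with integer endpoints of length $p_i+p_j\ge 2$, so if every integer shift is blocked then every positive shift is --- but the claim is true.) This is, however, a genuinely different route from the paper's proof, which fixes an \emph{arbitrary} minimal schedule $C$ and peels the treedepth tree: the jobs completing in each successive window of width $O(p)$ must meet every remaining root-to-leaf path, since otherwise a leaf job could be moved into that window, contradicting minimality; hence each window strips one level off the tree and $C_{\max}=O(td(G)\cdot p)$ for \emph{every} minimal schedule.

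The difference is not cosmetic, and your closing remark is where the proposal goes astray. The claimed ``staircase'' minimal schedule of makespan $\Theta(np)$ on a path does not exist: a job $j$ of degree $2$ has at most $2p$ time units of conflicting neighbour intervals, so the set of its blocked earlier completion times is covered by two open intervals of total length at most $2p_j+2p\le 4p$; minimality forces this to cover all of $[p_j,C_j)$, whence $C_j\le 5p$. More generally, the paper's universal bound on minimal schedules is correct (its window width $p$ should arguably be $2p$ to make the step ``none of these are processed within $[0,p)$'' airtight when $C_j=p+1$ and $p_j=p$, but this is absorbed by the $O(\cdot)$). The universal form is also what Theorem~\ref{thm:VaryingProc} actually needs: one must show that some \emph{optimal} schedule has small makespan, which follows by compressing an optimal schedule for a regular criterion to a minimal one (not increasing the objective) and then invoking that \emph{every} minimal schedule is short. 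Your argument exhibits only one short minimal schedule --- the compression of $C^0$ --- which need not be optimal, so it establishes the lemma's literal wording but not the stronger fact the paper relies on downstream.
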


\begin{proof}
Consider a rooted tree $T$ over $\{1,\ldots,n\}$ of height $height(T)=td(G)$ such that $i$ and $j$ have an ancestor-descendant relationship in $T$ for every $\{i,j\} \in E$. For $i=1,2,\ldots$, consider the following procedure: Let $J_i$ be the set of jobs~$j$ with $C_j \in [(i-1)p+1, \ldots, ip]$. Remove the set~$J_i$ from the current set of jobs, and set $i=i+1$. We claim that there are no more jobs remaining after $td(G)$ iterations of this procedure, implying that $C_{\max} \leq (td(G) + 1)p$.

Consider the first iteration $i=1$. Then for any leaf $\ell$ of $T$, there is at least one job of $J_1$ in the unique root-to-$\ell$ path $P_\ell$ of $T$. Otherwise, we can safely set $C_\ell= p_\ell$, since all jobs conflicting with $\ell$ are on $P_\ell$, and none of these are processed within the time interval $[0 \ldots, p) \supseteq [0, \ldots, C_\ell)$. This leads to a contradiction, as we assume $C$ is minimal. It follows that every $P_\ell$ path in $T$ contains at least one job of $J_i$. Now consider the tree $T_1$ obtained be removing $J_1$ from $T$ without changing the ancestor-descendant relationship of the remaining jobs. Then $height(T_1)=height(T)-1$. The same argument also holds for iterations $i > 1$: If for some leaf $\ell$  of $T_i$ the path $P_\ell$ does not contain any job of $J_i$, we can safely set $C_\ell=(i-1)p+p_{\ell}$, contradicting the minimality of $C$. Thus, $height(T_i)=height(T_{i-1})-1$ for all $i$, and so the process above terminates after $td(G)$ iterations. This implies that~$C_{\max} \leq (td(G) + 1)p$, and so combined with Lemma~\ref{lem:TreedepthTreewdith}, we get that $C_{\max} =O(tw(G) \cdot p \log n)$.
\qed
\end{proof}

\begin{theorem}
\label{thm:VaryingProc}%
The \threefield{P_n}{conc,p_j \leq p}{WL_{\max}} and \threefield{P_n}{conc,p_j \leq p}{\sum w_jT_j} problems are both solvable in $O^*((tw \cdot p \cdot \log n)^{O(tw)})$-time.    
\end{theorem}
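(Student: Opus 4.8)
The plan is to treat this as a direct corollary of Lemma~\ref{lem:3} and Theorem~\ref{thm:TreewidthDP}, mirroring exactly the way Theorem~\ref{thm:UnitProc} combines Corollary~\ref{cor:MinimalUnitProc} with Theorem~\ref{thm:TreewidthDP}. The only thing that genuinely needs establishing is that there is an \emph{optimal} schedule whose makespan obeys the bound $O(tw \cdot p \log n)$; once we have this, we simply run the dynamic program of Theorem~\ref{thm:TreewidthDP} with $k = O(tw \cdot p \log n)$. Since Theorem~\ref{thm:TreewidthDP} is stated for $r_j \geq 0$, it applies verbatim to the $r_j = 0$ setting of \threefield{P_n}{conc, p_j \leq p}{*}.

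First I would record that both $WL_{\max}$ and $\sum w_jT_j$ are regular, so that decreasing the completion time of any job never increases the objective value. Starting from an arbitrary optimal schedule, I would then repeatedly decrease the completion time of some job whenever doing so keeps the schedule feasible. Since completion times are positive integers bounded below (each $C_j \geq p_j \geq 1$), every such step strictly decreases $\sum_j C_j$ and the process terminates; by regularity it preserves optimality; and at termination no completion time can be decreased while remaining feasible, i.e.\ the schedule is \emph{minimal}. Thus there is an optimal schedule that is also minimal.

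Next I would invoke Lemma~\ref{lem:3}, whose proof in fact bounds the makespan of an \emph{arbitrary} fixed minimal schedule, not merely of one specific minimal schedule: peeling off, in blocks of $p$ consecutive time steps, the jobs falling in each block decreases the height of the treedepth tree $T$ by one per block (else minimality is violated), so after $td(G)$ blocks no jobs remain and $C_{\max} \leq (td(G)+1)p$. Applying this to the optimal minimal schedule produced in the previous step, and combining with Lemma~\ref{lem:TreedepthTreewdith}, yields an optimal schedule with $C_{\max} = O(tw \cdot p \log n)$. Finally, plugging $k = O(tw \cdot p \log n)$ into Theorem~\ref{thm:TreewidthDP} gives running time $O^*(k^{tw+1}) = O^*((tw \cdot p \cdot \log n)^{O(tw)})$, as claimed.

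The one point requiring care — and essentially the only obstacle — is the gap between what Lemma~\ref{lem:3} literally asserts (the \emph{existence} of some bounded-makespan minimal schedule) and what the theorem needs (a bounded-makespan \emph{optimal} schedule). A minimal schedule need not be optimal, so the existence statement alone does not suffice. This gap is bridged precisely by the regularity-based compression argument of the second step, which turns a given optimal schedule into one that is simultaneously optimal and minimal, together with the observation that the bound in Lemma~\ref{lem:3} holds for \emph{every} minimal schedule and hence for this one in particular.
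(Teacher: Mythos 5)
Your proof is correct and follows essentially the same route the paper intends (the paper leaves this theorem's proof implicit): combine the makespan bound for minimal schedules from Lemma~\ref{lem:3} with the dynamic program of Theorem~\ref{thm:TreewidthDP}, exactly as Theorem~\ref{thm:UnitProc} combines Corollary~\ref{cor:MinimalUnitProc} with that dynamic program. Your additional care in bridging the gap between ``some minimal schedule has bounded makespan'' and ``some \emph{optimal} schedule has bounded makespan'' --- via regularity of $WL_{\max}$ and $\sum w_jT_j$ and the observation that the argument in Lemma~\ref{lem:3} applies to every minimal schedule --- is precisely the reasoning the paper relies on implicitly, so nothing is missing.
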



\section{Unbounded Processing Times}
\label{section:unboundedprocessingtimes}

We next consider the case where the processing times of jobs are unbounded, \emph{i.e.} problems of the form \threefield{P_n}{conc}{\ast}. We show that already for the two most basic objective functions of $C_{\max}$ and $\sum C_j$, the corresponding scheduling with concurrency constrained problem becomes XALP-complete parameterized by the treewidth of the conflict graph. As \threefield{P_n}{conc}{C_{\max}} and \threefield{P_n}{conc}{\sum C_j} are special cases of any \threefield{P_n}{conc}{\ast} problem considered in this paper, this will complete the picture for the second row in Table~\ref{tab:unweighted}. For both these problems, we show XALP-hardness by a reduction from \textsc{Precoloring Extension}. 

\subsection{The Makespan objective}

Let $(H,k,\chi)$ be a given instance of \textsc{Precoloring Extension} with $H=(V,F)$, $V=\{1,\ldots,n\}$, $k$ an integer, and $\chi: V_0 \to \{1,\ldots,k\}$ a given precoloring of $H$ defined on a subset of vertices $V_0 \subseteq V$. 
We may assume that $H$ has at least one edge. 
The main idea of our reduction is as follows: We construct an instance of \threefield{P_n}{conc}{C_{\max}} by initially starting with the set of $n$ jobs $\{1,\ldots,n\}$ corresponding to the vertices of~$H$, each with processing time $p_j=1$, and a conflict graph~$G$ which is identical to $H$. We then add auxiliary jobs to each job $j \in V_0$ whose goal is to ensure that we get~$C_j=\chi(j)$ in any feasible schedule $C$ with  makespan at most $k$. In this way, feasible schedules of makespan at most $k$ correspond to valid extensions of $\chi$ when restricted to $\{1,\ldots,n\}$.

Our construction proceeds as follows: We start with $G=H$, and define the processing time of each job $j \in \{1,\ldots,n\}$ to be $p_j=1$. We then construct four jobs $a$, $a'$, $b$, and $b'$ such that:
\begin{itemize}
\item $p_a = p_b =1$ and $p_{a'}=p_{b'}=k-1$. 
\item $\{a,a'\}$, $\{b,b'\}$, and $\{a,b\}$ are edges in $G$.
\end{itemize}
The purpose of $a$ and $b$ is to serve as delimiter jobs, as is formalized in the following easy to verify lemma:
\begin{lemma}
\label{lem:delimiters}%
If $C$ is a feasible schedule for $\{a,a',b,b'\}$ with $C_{\max} \leq k$ then $\{C_a,C_b\}=\{1,k\}$.   
\end{lemma}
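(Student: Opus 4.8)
The plan is to pin down the completion times of the two long jobs $a'$ and $b'$ first, and then propagate these constraints through the three given edges onto the delimiter jobs $a$ and $b$. Throughout I will use that completion times are integers in $\{1,\ldots,k\}$ (by feasibility $C_j \geq p_j \geq 1$, and by assumption $C_j \leq C_{\max} \leq k$), and that feasibility on an edge $\{i,j\}$ means the half-open intervals $[C_i-p_i,C_i)$ and $[C_j-p_j,C_j)$ are disjoint.

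First I would observe that since $p_{a'}=k-1$, the inequalities $k-1 \leq C_{a'} \leq k$ force $C_{a'} \in \{k-1,k\}$, so the job $a'$ occupies either the interval $[0,k-1)$ or the interval $[1,k)$; the same holds for $b'$. Next, using the edge $\{a,a'\}$ together with the fact that $a$ is a unit job, I would argue by a short case distinction that $C_a$ is forced into one of the two extreme slots: if $a'$ occupies $[0,k-1)$ then the only unit interval inside $[0,k]$ disjoint from it is $[k-1,k)$, giving $C_a=k$; if $a'$ occupies $[1,k)$ then the only remaining option is $[0,1)$, giving $C_a=1$. Hence $C_a \in \{1,k\}$, and symmetrically $C_b \in \{1,k\}$ via the edge $\{b,b'\}$.

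Finally I would invoke the edge $\{a,b\}$: since both $a$ and $b$ are unit jobs, disjointness of $[C_a-1,C_a)$ and $[C_b-1,C_b)$ is equivalent to $C_a \neq C_b$. Combined with $C_a,C_b \in \{1,k\}$ and $k \geq 2$ (which follows from $H$ having at least one edge, so that a valid precoloring uses at least two colors, and makes $\{1,k\}$ a genuine two-element set), this yields $\{C_a,C_b\}=\{1,k\}$, as claimed.

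I do not expect a genuine obstacle here; the statement reduces to a direct case analysis. The only point requiring care is the interplay between the half-open intervals and the integrality of completion times, to be sure that no intermediate slot for $a$ or $b$ survives --- in particular that a unit job adjacent to a length-$(k-1)$ job that fills all but one unit of the horizon is squeezed entirely onto the single free endpoint.
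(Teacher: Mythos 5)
Your proof is correct, and it is exactly the direct case analysis the paper has in mind — the paper states this lemma without proof, calling it "easy to verify." You correctly pin down $C_{a'},C_{b'}\in\{k-1,k\}$ from $p_{a'}=p_{b'}=k-1$, squeeze $a$ and $b$ onto the endpoints via the edges $\{a,a'\}$ and $\{b,b'\}$, and separate them with the edge $\{a,b\}$; the observation that $k\geq 2$ (since $H$ has an edge) is the right way to make $\{1,k\}$ a genuine two-element set.
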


Next add at most two additional jobs per each precolored $j \in V_0$. Let $j \in V_0$, and suppose that $\chi(j) \in \{2,\ldots,k-1\}$. Create two jobs $j(1)$ and $j(2)$ such that:
\begin{itemize}
\item $p_{j(1)} = \chi(j)-1$ and $p_{j(2)} = k-\chi(j)$.
\item $\{j,j(1)\}$, $\{j,j(2)\}$, and $\{j(1),j(2)\}$ are edges in $G$.
\item $\{a,j\}$, $\{b,j\}$ $\{a,j(2)\}$ and $\{b,j(1)\}$ are edges in $G$.
\end{itemize}
For $j \in V_0$ with $\chi(j)=1$, we create only job $j(2)$ above, and we do not add the edge $\{a,j\}$ to $G$. For $j \in V_0$ with $\chi(j)=k$, we create only job~$j(1)$ above,  and we do not add the edge $\{b,j\}$ to $G$. This completes our construction. 

\begin{lemma}
\label{lem:TreewidthCmax}
$tw(G)\leq tw(H)+2$ and $pw(G) \leq pw(H)+4$.
\end{lemma}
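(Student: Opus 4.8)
The plan is to build a decomposition of $G$ directly from an optimal tree (resp.\ path) decomposition of $H$, treating the two delimiter jobs $a,b$ as \emph{global} vertices and the auxiliary jobs $j(1),j(2)$ as \emph{private} to a single precolored vertex $j$. First I would take a tree decomposition of $H$ of width $tw(H)$ and add both $a$ and $b$ to every bag. Since $a$ and $b$ are new vertices this raises each bag size by exactly $2$, so the width becomes $tw(H)+2$, and the edge $\{a,b\}$ together with all edges $\{a,j\},\{b,j\}$ for $j\in V_0$ become covered automatically, because every bag containing $j$ now also contains $a$ and $b$. The pendant jobs are dealt with by hanging leaf bags $\{a,a'\}$ and $\{b,b'\}$ off any existing bag.

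Next I would handle the auxiliary jobs. For a fully precolored $j$ (with $\chi(j)\in\{2,\dots,k-1\}$) the subgraph induced on $\{a,b,j,j(1),j(2)\}$ contains a $K_4$ minor (take branch sets $\{a\},\{b\},\{j\},\{j(1),j(2)\}$), so a bag of size $4$ is unavoidable; fortunately $H$ has an edge, hence $tw(H)\ge 1$ and $tw(H)+3\ge 4$. I would therefore pick any bag containing $j$ and chain below it a path of two fresh bags $\{a,b,j,j(1)\}$ and $\{a,j,j(1),j(2)\}$, each of size $4$. A direct check shows these cover exactly the remaining gadget edges $\{j,j(1)\},\{b,j(1)\},\{j,j(2)\},\{a,j(2)\},\{j(1),j(2)\}$ while keeping every vertex's bags connected: $a$ and $j$ lie in both new bags, $b$ only in the first (a pendant of the global $b$-subtree), and the private $j(1),j(2)$ occur only here, so different gadgets never interfere. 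The boundary cases $\chi(j)\in\{1,k\}$ create a single auxiliary job and need only one size-$3$ bag. As every new bag has size at most $4\le tw(H)+3$, the width stays $tw(H)+2$.

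For pathwidth the same idea works, but the obstacle is that a path decomposition offers no leaves on which to hang small bags: inserting a bag into the interior of the sequence must preserve the interval property of \emph{every} vertex whose bags straddle the insertion point, which forces the inserted bag to be a superset of a neighbouring bag. Starting from a path decomposition of $H$ of width $pw(H)$ and again adding $a,b$ everywhere (width $pw(H)+2$), I would, for each $j\in V_0$, insert immediately after the last bag $X$ containing $j$ the single bag $X\cup\{j(1),j(2)\}$ (or $X\cup\{j(1)\}$, resp.\ $X\cup\{j(2)\}$, in the boundary cases). This one bag contains all of $a,b,j,j(1),j(2)$ and hence covers every gadget edge; it keeps $j$'s interval contiguous, since it merely extends that interval by one bag past $X$; and it leaves all other intervals intact because we only \emph{enlarge} a copy of $X$ rather than splice in an unrelated set. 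Each such bag has size at most $(pw(H)+3)+2=pw(H)+5$, giving width $pw(H)+4$, the extra $+2$ over the tree case being precisely the cost of this superset trick. I expect the contiguity bookkeeping in this interior-insertion step to be the only delicate point; the remainder is routine verification that all edges are covered and all bag sets stay connected.
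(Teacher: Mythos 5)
Your proposal is correct and follows essentially the same route as the paper: add the delimiters $a,b$ to every bag, hang pendant bags for $a',b'$, attach per-vertex bags $\{a,b,j,j(1)\}$ and $\{a,j,j(1),j(2)\}$ in the tree case, and in the path case insert a single superset bag $X\cup\{j(1),j(2)\}$ next to a bag $X$ containing $j$ (the paper duplicates $X$ to guarantee contiguity, you pick the last bag containing $j$ — both work). The bounds $tw(H)+2$ and $pw(H)+4$ come out identically.
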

\begin{proof}
Let $\mathcal{T}_H= (\mathcal{X},F)$ be a tree decomposition of $H$ of width $tw(H)=\max\{|X|:X \in \mathcal{X}\}$. We construct a tree decomposition $\mathcal{T}_G$ of $G$ from $\mathcal{T}_H$ as follows. First, we add $a$ and $b$ to each node $X \in \mathcal{X}$. Next, we add a new leaf $X' = \{a,b,a',b'\}$ and connect it to any node. Finally, for each $j \in \{1,\ldots,n\}$, we create two new nodes $X_{j(1)}=\{j,j(1), a, b\}$ and $X_{j(2)}=\{j, j(1),j(2), a\}$. We add $X_{j(1)}$ as a leaf node connected to some node $X \in \mathcal{X}$ with $\{j,a,b\} \subseteq X$ (such a node must exist since $\mathcal{T}_H$ is a tree decomposition of $H$), and connect $X_{j(2)}$ to $X_{j(1)}$. It is not difficult to see $\mathcal{T}_G$ is indeed a tree decomposition of $G$, and that it has width $tw(H)+2$ since $H$ has at least one edge.

Let $\mathcal{T}_H= (\mathcal{X},F)$ be a path decomposition of $H$ of width $pw(H)=\max\{|X|:X \in \mathcal{X}\}$. We construct a path decomposition $\mathcal{T}_G$ of $G$ from $\mathcal{T}_H$ as follows. First, we add $a$ and $b$ to each node $X \in \mathcal{X}$. Next, we add a new leaf $X' = \{a,b,a',b'\}$ to either end of $\mathcal{T}_H$. Finally, for each $j \in \{1,\ldots,n\}$, let $X_j \in \mathcal{X}$ be any node such that $j \in X$. Such a node must exist since $\mathcal{T}_H$ is a path decomposition of $H$. Duplicate this node, call it $X_j'$ and make $X_j'$ a neighbor of $X_j$ on the path. We now create a new node $X_{j'}= X_j \cup \{j(1), j(2)\}$ and insert it on the path between $X_j$ and $X_j'$. Note that $a,b \in X_j$ by construction. It is now not difficult to see $\mathcal{T}_G$ is indeed a path decomposition of $G$, and that it has width $pw(H)+4$.
\qed
\end{proof}

\begin{lemma}
\label{lem:ForwardDirectionCmax}%
If there is a solution $\chi': \{1,\ldots,n\} \to \{1,\ldots,k\}$ to $(H,k,\chi)$, then there is a feasible schedule~$C$ for the constructed \threefield{P_n}{conc}{C_{\max}} instance with $C_{\max} \leq k$. 
\end{lemma}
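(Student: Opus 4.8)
The plan is to exhibit an explicit feasible schedule $C$ of makespan exactly $k$, read off directly from the extension $\chi'$. The guiding idea is that each original job should complete precisely at its assigned color, so I set $C_j = \chi'(j)$ for every $j \in \{1,\ldots,n\}$; since $p_j = 1$, job $j$ then occupies the unit interval $[\chi'(j)-1,\chi'(j))$. The delimiter jobs are pinned to the two ends in the manner forced by Lemma~\ref{lem:delimiters}: I set $C_a = 1$ and $C_b = k$, and fill the surrounding space with $C_{a'} = k$, so that $a'$ occupies $[1,k)$ and is disjoint from $a$'s interval $[0,1)$, and $C_{b'} = k-1$, so that $b'$ occupies $[0,k-1)$ and is disjoint from $b$'s interval $[k-1,k)$.

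For each precolored $j \in V_0$ with $\chi(j) \in \{2,\ldots,k-1\}$, I place the two auxiliary jobs so that they sandwich $j$: setting $C_{j(1)} = \chi(j)-1$ makes $j(1)$ occupy $[0,\chi(j)-1)$, and setting $C_{j(2)} = k$ makes $j(2)$ occupy $[\chi(j),k)$. Since $C_j = \chi'(j) = \chi(j)$ for precolored $j$, the job $j(1)$ then lies entirely before $j$'s interval and $j(2)$ entirely after it. In the boundary cases I retain only the auxiliary job that was actually constructed: for $\chi(j)=1$ I set $C_{j(2)}=k$ (occupying $[1,k)$), and for $\chi(j)=k$ I set $C_{j(1)}=k-1$ (occupying $[0,k-1)$).

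Next I would verify feasibility. The non-negativity requirements $C_j - p_j \ge 0$ hold for every job by direct substitution. The edges of $H$ among $\{1,\ldots,n\}$ are satisfied because $\chi'$ is a proper coloring, so any two adjacent original jobs receive distinct integer completion times and hence disjoint unit intervals. The remaining edges are exactly those introduced by the construction, and I would check these one family at a time: the delimiter edges $\{a,b\}$, $\{a,a'\}$, $\{b,b'\}$, and, for each precolored $j$, the surviving edges among $\{j,j(1),j(2)\}$ together with $\{a,j\}$, $\{b,j\}$, $\{a,j(2)\}$, $\{b,j(1)\}$. Every such check reduces to observing that two intervals with integer endpoints are disjoint; for example $j(2)$'s interval $[\chi(j),k)$ avoids $a$'s interval $[0,1)$ precisely because $\chi(j)\ge 2$, and $j$'s interval avoids $b$'s interval $[k-1,k)$ precisely because $\chi(j)\le k-1$.

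Finally, since the largest completion time assigned anywhere is $k$, the schedule satisfies $C_{\max} = k \le k$, which finishes the argument. I expect the only real work to be this case analysis of interval-disjointness for the auxiliary and delimiter jobs; it is routine, but the boundary cases $\chi(j)\in\{1,k\}$ genuinely require separate treatment, since there an auxiliary job would degenerate to processing time $0$ and the edge $\{a,j\}$ (respectively $\{b,j\}$) is deliberately absent — indeed $j$'s interval then coincides with $a$'s (respectively $b$'s), so the generic sandwich placement cannot be applied unchanged.
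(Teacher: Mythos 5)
Your proposal is correct and constructs exactly the same schedule as the paper ($C_j=\chi'(j)$, $C_a=1$, $C_b=k$, $C_{a'}=k$, $C_{b'}=k-1$, $C_{j(1)}=\chi(j)-1$, $C_{j(2)}=k$); the paper simply asserts feasibility is "easy to verify" whereas you spell out the interval-disjointness checks. Nothing is missing.
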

\begin{proof}
It is easy to verify that the following schedule $C$ is feasible and has makespan at most $k$: set
$C_j = \chi'(j)$ for all $j \in \{1,\ldots,n\}$;
 $C_a = 1$, $C_b=k$, $C_{a'}=k$, and $C_{b'}=k-1$;
$C_{j(1)} = \chi'(j)-1$ for all $j \in V_0$ with $\chi'(j) > 1$;
 $C_{j(2)} = k$ for all $j \in V_0$ with $\chi'(j) < k$.
\qed
\end{proof}

\begin{lemma}
\label{lem:BackwardDirectionCmax}
If there is a feasible schedule~$C$ for the constructed \threefield{P_n}{conc}{C_{\max}} instance with $C_{\max} \leq k$, then there is a solution $\chi': \{1,\ldots,n\} \to \{1,\ldots,k\}$ to $(H,k,\chi)$.   
\end{lemma}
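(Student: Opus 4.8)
The plan is to show that any feasible schedule $C$ with $C_{\max} \leq k$, when restricted to the original vertices $\{1,\ldots,n\}$, already yields a valid extension $\chi'$ of the precoloring $\chi$. That is, I would simply set $\chi'(j) = C_j$ for every $j \in \{1,\ldots,n\}$ and verify two things: (i) $\chi'$ is a proper coloring of $H$, and (ii) $\chi'$ agrees with $\chi$ on $V_0$. Property (i) is immediate: since $p_j = 1$ for all $j \in \{1,\ldots,n\}$ and $G$ restricted to $\{1,\ldots,n\}$ is exactly $H$, any feasible schedule forces $C_i \neq C_j$ for every edge $\{i,j\} \in F$, which is precisely the statement that $\chi'$ is proper. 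Since $C_{\max} \leq k$, all values lie in $\{1,\ldots,k\}$, so $\chi'$ is a legitimate $k$-coloring.

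The heart of the argument is property (ii): I must show that the auxiliary jobs force $C_j = \chi(j)$ for each precolored $j \in V_0$. First I would invoke Lemma~\ref{lem:delimiters} to fix $\{C_a, C_b\} = \{1,k\}$. Without loss of generality I would like $C_a = 1$ and $C_b = k$; I should check that the construction is symmetric enough that the other case is handled identically (or rule it out). Now fix $j \in V_0$ with $\chi(j) \in \{2,\ldots,k-1\}$. The two auxiliary jobs $j(1)$ and $j(2)$ have processing times $p_{j(1)} = \chi(j)-1$ and $p_{j(2)} = k-\chi(j)$, and the edges $\{j,j(1)\}$, $\{j,j(2)\}$, $\{j(1),j(2)\}$, together with $\{a, j(2)\}$, $\{b, j(1)\}$, and $\{a,j\}$, $\{b,j\}$, are present. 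The key observation is that $j(1)$, $j(2)$, and $j$ together occupy a total of $(\chi(j)-1) + (k-\chi(j)) + 1 = k$ units of processing time, and these three jobs are mutually conflicting, so they must tile the interval $[0,k)$ with no overlaps and no gaps. The edges to $a$ (completing at time $1$, occupying $[0,1)$) and to $b$ (completing at time $k$, occupying $[k-1,k)$) then pin down the order: $j(1)$ conflicts with $b$ so it cannot occupy $[k-1,k)$, and $j(2)$ conflicts with $a$ so it cannot occupy $[0,1)$. Combined with the no-gap tiling, this forces $j(1)$ into $[0,\chi(j)-1)$, then $j$ into $[\chi(j)-1,\chi(j))$, and $j(2)$ into $[\chi(j),k)$, giving $C_j = \chi(j)$ exactly.

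For the boundary cases I would handle $\chi(j) = 1$ and $\chi(j) = k$ separately, since there only one auxiliary job is created and one edge to a delimiter is omitted. For $\chi(j)=1$: only $j(2)$ exists with $p_{j(2)} = k-1$, conflicting with $j$ and with $a$, and the edge $\{a,j\}$ is dropped (so $j$ may coincide with $a$ at time $1$). Here $j$ and $j(2)$ together occupy $k$ units and must tile $[0,k)$; since $j(2)$ conflicts with $a$ it cannot occupy $[0,1)$, forcing $j$ into $[0,1)$, i.e.\ $C_j = 1 = \chi(j)$. The case $\chi(j)=k$ is symmetric using $j(1)$ and the omitted edge $\{b,j\}$. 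The main obstacle I anticipate is being careful about the no-gap tiling claim — I must argue rigorously that three (or two) mutually conflicting unit-occupancy-summing-to-$k$ jobs confined to $[0,k)$ leave no idle slack, which follows because their total processing time equals the available makespan and they are pairwise non-overlapping; then the delimiter adjacencies remove the remaining ambiguity in their relative order. Once $C_j = \chi(j)$ is established for all $j \in V_0$, property (ii) holds and the restriction $\chi'$ is the desired solution to $(H,k,\chi)$, completing the proof. \qed
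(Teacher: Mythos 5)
Your main-case argument (the tiling of $[0,k)$ by the mutually conflicting jobs $j(1)$, $j$, $j(2)$ of total processing time $k$, with the delimiter adjacencies forcing the order $j(1),j,j(2)$) is correct and is essentially the paper's own argument, just phrased as a tiling rather than as a chain of ``who completes at time $k$'' deductions. The boundary cases $\chi(j)=1$ and $\chi(j)=k$ are also handled as in the paper.

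There is, however, one genuine gap: the case $C_a=k$, $C_b=1$, which you explicitly defer (``I should check that the construction is symmetric enough that the other case is handled identically, or rule it out''). This case can neither be ruled out nor handled identically under your definition $\chi'(j)=C_j$. Lemma~\ref{lem:delimiters} only gives $\{C_a,C_b\}=\{1,k\}$, and the construction is invariant under swapping the roles of $a$ and $b$, so a feasible schedule with $C_a=k$ and $C_b=1$ is perfectly possible. In that schedule your tiling argument (with $a$ now occupying $[k-1,k)$ and $b$ occupying $[0,1)$) forces $C_j=k-\chi(j)+1$ for every $j\in V_0$, so the restriction $\chi'(j)=C_j$ does \emph{not} agree with $\chi$ on $V_0$ and is not a valid extension. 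The paper resolves this by defining $\chi'(j)=k-C_j+1$ in this case; this reflection is still a proper coloring of $H$ with values in $\{1,\ldots,k\}$ and now satisfies $\chi'(j)=\chi(j)$ on $V_0$. You need to add this second definition of $\chi'$ (or an explicit symmetry argument that relabels $a\leftrightarrow b$ and reverses time) to close the proof.
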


\begin{proof}
Let $C$ be a feasible schedule for the constructed \threefield{P_n}{conc}{C_{\max}} instance with $C_{\max} \leq k$. Then, according to Lemma~\ref{lem:delimiters}, either $C_a=1$ and $C_b=k$, or $C_a=k$ and $C_b=1$. 

Let us begin with the case of $C_a=1$ and $C_b=k$. 
Define $\chi'(j) = C_j$.
Consider any $j\in V_0$, and suppose that $\chi(j) \in \{2,\ldots,k-1\}$. Then as jobs~$j$,~$j(1)$, and~$j(2)$ are mutually conflicting, we have $C_j \neq C_{j(1)} \neq C_{j(2)}$. Moreover, as the total processing time of these three jobs is~$k = C_{\max}$, one of them must complete at time $k$, and since both $j$ and $j(1)$ are in conflict with job~$b$, this can only be job $j(2)$. Thus, $C_{j(2)}=k$. Again, as the total processing time of all three jobs is $k$, one of $j$ and~$j(2)$ must complete right before job $j(2)$ starts, \emph{i.e.} at time $C_{j(2)}-p_{j(2)}=k-(k-\chi(j))=\chi(j)$. If job $j(1)$ completes at this time, then job~$j$ completes at time $C_{j(1)}-p_{j(1)}=\chi(j)-(\chi(j)-1)=1$, which contradicts the feasibility of $C$ since $j$ and $a$ are in conflict. Thus, it must be that $C_j = \chi(j)$.  

The cases where~$\chi(j)=1$ or~$\chi(j)=k$ are very similar. If~$\chi(j)=1$, then $C_{j(2)}=k$, or we would have~$C_j=k$, which contradicts the feasibility of $C$, as $j$ and $b$ are in conflict. Thus, job~$j$ completes at time~$C_{j(2)} - p_{j(2)} = k - (k-1) = 1= \chi(j)$. If~$\chi(j)=k$, then $C_{j(1)}=C_{j(1)}=k-1$ and $j$ completes at time $C_{j(1)} - p_j = k-1 +1 = k= \chi(j)$. It therefore follows that if $C_a=1$ and $C_b=k$, then $C_j = \chi(j)$ for all~$j \in V_0$. Thus, since $C$ is feasible, the restriction $\chi'$ of $C$ onto $\{1,\ldots,n\}$ is a proper coloring of $H$, and so~$\chi'$ is a solution for $(H,k,\chi)$.

Next suppose that $C_a=k$ and $C_b=1$. Then following the same arguments as above, we get that $C_j = k-\chi(j)+1$ for all~$j \in V_0$. But then the coloring $\chi'$ with $\chi'(j)=k-C_j+1$ for all $j \in \{1,\ldots,n\}$ is also a proper coloring of $H$. Moreover, $\chi'$ uses at most $k$ colors, and $\chi'(j)=\chi(j)$ for all $j \in V_0$. Thus, $\chi'$ is a solution for $(H,k,\chi)$.
\qed
\end{proof}

The reduction can be carried out in polynomial time and an additional $O(\log n)$ working space. Correctness of the reduction is given by Lemma~\ref{lem:ForwardDirectionCmax} and Lemma~\ref{lem:BackwardDirectionCmax}. Moreover, by Lemma~\ref{lem:TreewidthCmax} we have $tw(G) = O(tw(H))$. Thus, in total, we have described a parameterized logspace reduction from the XALP-hard \textsc{Precoloring Extension} problem to \threefield{P_n}{conc}{C_{\max}}. Since XALP-membership follows from Corollary~\ref{corollary:inXALP}, we obtain the following theorem: 

\begin{theorem}
\threefield{P_n}{conc}{C_{\max}}, parameterized by treewidth, is XALP-complete.     
\end{theorem}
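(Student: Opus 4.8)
The plan is to establish both halves of XALP-completeness separately: membership in XALP and XALP-hardness under parameterized logspace reductions.

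For membership, I would observe that \threefield{P_n}{conc}{C_{\max}} is precisely the special case of \threefield{P_n}{conc,r_j \geq 0}{WL_{\max}} in which every job has release time $r_j=0$, weight $w_j=1$, and due date $d_j=0$; under these settings $WL_{\max}=\max_j w_j(C_j-d_j)=\max_j C_j = C_{\max}$. Membership therefore transfers directly from Corollary~\ref{corollary:inXALP}, with no additional work required.

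For hardness, the approach is to exhibit a parameterized logspace reduction from \textsc{Precoloring Extension}, which is XALP-hard for the treewidth parameter. The reduction is exactly the gadget construction developed above: beginning with the conflict graph $G=H$ and unit processing times, I adjoin the four delimiter jobs $a,a',b,b'$ and, for each precolored vertex $j\in V_0$, the (at most two) gadget jobs $j(1),j(2)$ whose processing times are calibrated so that $p_{j(1)}=\chi(j)-1$ and $p_{j(2)}=k-\chi(j)$. Three facts then need to be assembled, each supplied by a preceding lemma: that the reduction preserves bounded treewidth, namely $tw(G)=O(tw(H))$, by Lemma~\ref{lem:TreewidthCmax}; that any valid extension of $\chi$ yields a feasible schedule with $C_{\max}\le k$, by Lemma~\ref{lem:ForwardDirectionCmax}; and conversely that any feasible schedule with $C_{\max}\le k$ induces a valid extension of $\chi$ on $\{1,\ldots,n\}$, by Lemma~\ref{lem:BackwardDirectionCmax}. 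I would close by noting that the entire construction is computable in polynomial time using only $O(\log n)$ additional working space, so it meets the parameterized logspace requirement, and combining this with membership finishes the theorem.

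The hard part lies in the backward correctness argument (Lemma~\ref{lem:BackwardDirectionCmax}) and, underpinning it, the rigidity of the gadgets. The delimiters $a,b$ must be forced to occupy the two extreme completion times $\{1,k\}$ (Lemma~\ref{lem:delimiters}), thereby pinning the full color range; then for each $j\in V_0$ the three mutually conflicting jobs $j,j(1),j(2)$, whose processing times sum to exactly $k$, together with the edges to $a$ and $b$, must force $C_j=\chi(j)$ with no slack. The genuinely delicate points are the symmetric case $C_a=k,\,C_b=1$, in which the recovered coloring is the reversal $\chi'(j)=k-C_j+1$, and the boundary cases $\chi(j)\in\{1,k\}$, where only one gadget job is created and one delimiter edge is suppressed; this casework is precisely what Lemma~\ref{lem:BackwardDirectionCmax} discharges.
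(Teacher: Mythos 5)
Your proposal matches the paper's proof essentially verbatim: membership is obtained by specializing Corollary~\ref{corollary:inXALP}, and hardness follows from the same \textsc{Precoloring Extension} reduction with the delimiter jobs $a,a',b,b'$ and the per-vertex gadgets $j(1),j(2)$, assembled from Lemmas~\ref{lem:delimiters}, \ref{lem:TreewidthCmax}, \ref{lem:ForwardDirectionCmax}, and~\ref{lem:BackwardDirectionCmax}. You also correctly identify the backward direction and its symmetric and boundary cases as the delicate part, which is exactly where the paper's effort lies.
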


\subsection{Total completion time}
Let $(H,k,\chi)$ be a given instance of \textsc{Precoloring Extension} with $H=(\{1,\ldots,n\},F)$, $k$ an integer, and $\chi: V_0 \to \{1,\ldots,k\}$ a given precoloring of $H$ defined on a subset of vertices $V_0 \subseteq V$. 
We may assume that $H$ has at least one edge, and thus $n \geq 2$, or the instance is trivial. 
On a high level the main idea of our reduction is similar to the one for \threefield{P_n}{conc}{\sum C_{\max}}: We initially start with $H$ as our conflict graph for the set of jobs $\{1,\ldots,n\}$. We then add auxiliary jobs to each job $j \in \{1,\ldots,n\}$ whose goal is to ensure that in an optimal schedule $C$, jobs $j \in V_0$ complete precisely at time $\chi(j)$, while jobs in $\{1,\ldots,n\} \setminus V_0$ complete anywhere in $\{1,\ldots,k\}$. However, the technical details here are a bit more involved in comparison to the \threefield{P_n}{conc}{\sum C_{\max}} problem. 

We first set $X=nk+1$, and $G=H$ with $p_j=1$ for all $j \in \{1,\ldots,n\}$. We then construct a set of \emph{upper bound jobs} for each $j \in \{1,\ldots,n\}$. Define $u_j = \chi(j)$ for each $j \in V_0$, and $u_j = k$ for each $j \in \{1,\ldots,n\} \setminus V_0$. Then $u_j$ is an upper bound on the color assigned to vertex $j$ in a solution $\chi'$ for $(H,k,\chi)$; that is, $\chi'(j) \leq u_j$ for any such $\chi'$. For each $j \in \{1,\ldots,n\}$, we construct (see Figure~\ref{fig:SumCj}): 
\begin{itemize}
\item jobs $j(1),\ldots,j(X)$, each having a processing time of $X-u_j$. These are the \emph{primary} upper bound jobs of $j$. Add the edges $\{j,j(1)\}, \ldots,\{j,j(X)\}$ to $G$. 
\item jobs $j(i,1),\ldots,j(i,X)$ for each $i \in \{1,\ldots,X\}$, each with processing time~$u_j$. Add all edges $\{j(i),j(i,1)\},\ldots,\{j(i),j(i,X)\}$, for each $i \in \{1,\ldots,X\}$, to $G$. We collectively call all these jobs the \emph{secondary} upper bound jobs of $j$. Note that the secondary upper bound jobs of $j$ are not in conflict with $j$ itself. 
\end{itemize}
Next, we construct \emph{lower bound jobs} for each $j \in \{1,\ldots,n\}$. Define $\ell_j = \chi(j)$ for each $j \in V_0$, and $\ell_j = 1$ for each $j \in \{1,\ldots,n\} \setminus V_0$. Then $\ell_j$ is a lower bound on the color assigned to vertex $j$ in a solution $\chi'$ for $(H,k,\chi)$; that is, $\chi'(j) \geq \ell_j$ for any such $\chi'$. For each $j \in V_0$ with $\chi(j) > 1$, we construct (see Figure~\ref{fig:SumCj}): 
\begin{itemize}
\item jobs $j^*(1),\ldots,j^*(X)$, each of which has processing time of $\ell_j -1$ and is in conflict only with $j$. We add all edges $\{j,j^*(1)\},\ldots,\{j,j^*(X)\}$ to $G$.
\end{itemize}
This completes our construction of the jobs for the constructed \threefield{P_n}{conc}{\sum C_j} instance. Note that $tw(G)=tw(H)$, as one only needs to append nodes of size~$2$ to a tree decomposition of $H$ to obtain a tree decomposition of $G$. Similarly, $pw(G) \leq pw(H)+2$, as one only needs to add copies of existing nodes with two additional vertices. Finally, we set the required total completion time of all jobs to be 
$K = \sum^n_{j=1} \Big[ (1+u_j) \cdot X^2 + (\ell_j-1) \cdot X \Big] + nk$.

\begin{figure}[t!]
\centering
\includegraphics[width=\linewidth]{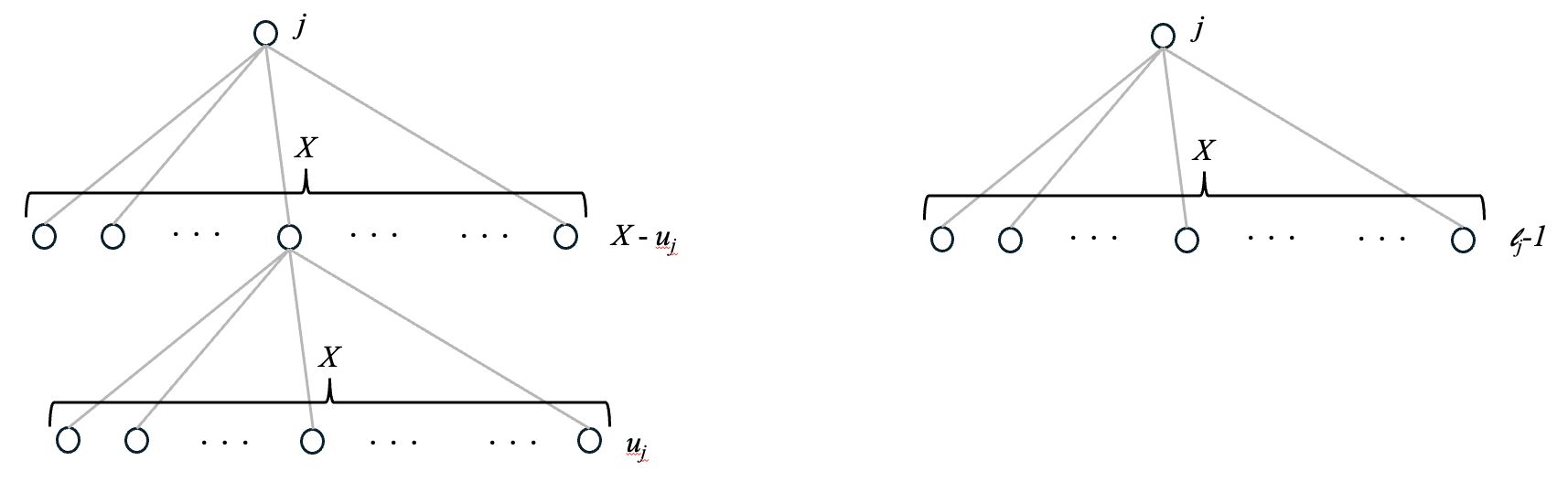}
\caption{The conflict graph of the upper bound (left) and lower bound jobs (right) of some job~$j$. The processing times are depicted to the right of each job. \label{fig:SumCj}}
\end{figure}

\begin{lemma}
\label{lem:ForwardDirectionSumCj}%
If there is a solution $\chi': \{1,\ldots,n\} \to \{1,\ldots,k\}$ to $(H,k,\chi)$, then there is a feasible schedule~$C$ for the constructed \threefield{P_n}{conc}{\sum C_j} instance with total completion time at most $K$.
\end{lemma}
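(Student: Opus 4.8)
The plan is to exhibit one explicit feasible schedule and check that its total completion time is at most $K$; since we only need an upper bound, no optimality argument is required. First I would schedule the original jobs by setting $C_j=\chi'(j)$ for every $j\in\{1,\ldots,n\}$, so that $j$ occupies the unit interval $[\chi'(j)-1,\chi'(j))$. Because $\chi'$ extends $\chi$, we have $\chi'(j)=\chi(j)=\ell_j=u_j$ for $j\in V_0$, while $1=\ell_j\le\chi'(j)\le u_j=k$ for $j\notin V_0$; in all cases $\ell_j\le\chi'(j)\le u_j$, which is exactly what makes room for the auxiliary jobs.

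Next I would place each auxiliary job as early as its conflicts permit. For each $j$, schedule all $X$ primary jobs $j(1),\ldots,j(X)$ in the interval $[u_j,X)$ (legal, since each has length $X-u_j$ and conflicts only with $j$, whose interval ends by time $u_j$), giving $C_{j(i)}=X$. Then schedule every secondary job $j(i,m)$ in $[0,u_j)$ (legal, since each has length $u_j$, conflicts only with $j(i)$, and $[0,u_j)$ is disjoint from $[u_j,X)$), giving $C_{j(i,m)}=u_j$. Finally, for each $j\in V_0$ with $\chi(j)>1$, schedule every lower-bound job $j^*(i)$ in $[0,\ell_j-1)$ (legal, since each has length $\ell_j-1$, conflicts only with $j$, and $j$'s interval $[\ell_j-1,\ell_j)$ is disjoint from $[0,\ell_j-1)$), giving $C_{j^*(i)}=\ell_j-1$.

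The feasibility verification is then routine. Original edges of $H$ are respected because $\chi'$ is a proper coloring, so conflicting original jobs receive distinct integer completion times and hence disjoint unit intervals. Each intra-gadget conflict ($j$ versus $j(i)$, $j(i)$ versus $j(i,m)$, and $j$ versus $j^*(i)$) is satisfied by the interval disjointness noted above, and there are no conflicts between the gadgets of distinct original jobs, since all auxiliary edges live inside a single gadget. All start times are nonnegative and all release times are $0$, so the schedule is feasible for the \threefield{P_n}{conc}{\sum C_j} instance.

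It remains to sum the completion times and compare with $K$. For each $j$ the $X$ primary jobs contribute $X\cdot X=X^2$, the $X^2$ secondary jobs contribute $X^2 u_j$, the (at most $X$) lower-bound jobs contribute $X(\ell_j-1)$ (which is $0$ precisely when $\ell_j=1$, i.e.\ when no such jobs exist), and the original job $j$ contributes $\chi'(j)$. Summing over all $j$ and using $\sum_j\chi'(j)\le nk$ yields total completion time at most $\sum_{j=1}^n\big[(1+u_j)X^2+(\ell_j-1)X\big]+nk=K$. I expect the only genuinely delicate point to be confirming that the per-gadget job counts ($X$ primaries, $X^2$ secondaries, $X$ lower-bound jobs) reproduce exactly the coefficients $(1+u_j)X^2$ and $(\ell_j-1)X$ in $K$; once the three scheduling windows $[u_j,X)$, $[0,u_j)$, and $[0,\ell_j-1)$ are fixed, both the feasibility and the arithmetic are mechanical.
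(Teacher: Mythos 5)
Your proposal is correct and constructs exactly the same schedule as the paper ($C_j=\chi'(j)$, primaries at $X$, secondaries at $u_j$, lower-bound jobs at $\ell_j-1$), with the same count-and-sum verification against $K$; your interval-based feasibility check is just a more explicit version of the paper's one-line justification. The only divergence is that you correctly state $\ell_j\le\chi'(j)\le u_j$ where the paper writes the (typographically erroneous) strict inequality $\ell_j<\chi'(j)$.
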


\begin{proof}
Define a schedule $C$ as follows: 
\begin{itemize}
\item $C(j) = \chi'(j)$ for all $j \in \{1,\ldots,n\}$.
\item $C(j(i))=X$ for all $j \in \{1,\ldots,n\}$ and $i \in \{1,\ldots,X\}$.
\item $C(j(i,i_0))=u_j$ for all $j \in \{1,\ldots,n\}$ and $i,i_0 \in \{1,\ldots,X\}$.
\item $C(j^*(i))= \ell_j-1$ for all $j \in V_0$ such that $\chi(j) > 1$ and $i \in \{1,\ldots,X\}$.
\end{itemize}
Then $C$ is feasible, as $\chi'$ is a proper coloring on $H$ with $\ell_j < \chi'(j) \leq u_j$ for all $j \in \{1,\ldots,n\}$. Moreover, for every $j \in \{1,\ldots,n\}$, the total completion time of all upper bound jobs of $j$ is~$X^2\cdot u_j + X \cdot X = (1+u_j)\cdot X^2$, and the total completion time of all lower bound jobs of $j$ is~$(\ell_j-1)\cdot X$. As $\sum^n_{j=1} C(j) = \sum^n_{j=1} \chi'(j) \leq nk$, it follows that the total completion time of all jobs in $C$ is at most $K$. 
\qed
\end{proof}

We now prove the converse direction, where the idea is that if $C(j) < \ell_j$ or $C(j) > u_j$, then the total completion time of the auxiliary jobs is too large.

\begin{lemma}
\label{lem:UpperBoundJobs}%
Let $C$ be a minimal feasible schedule for the constructed \threefield{P_n}{conc}{\sum C_j} instance. Consider some $j \in \{1,\ldots,n\}$, and let~$J$ denote the set of upper bound jobs of~$j$. Then the total completion time of all jobs in~$J$ in $C$ is at least~$(u_j+1) \cdot X^2$. Moreover, if $C(j) \in \{u_j+1,\ldots,X\}$, then the total completion time of all jobs in~$J$ is at least~$(u_j+1) \cdot X^2 + X$.
\end{lemma}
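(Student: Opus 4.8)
The plan is to exploit that the upper bound jobs of $j$ decompose into $X$ independent gadgets, one per primary job, and to lower bound the total completion time of each gadget separately. Concretely, for each $i \in \{1,\ldots,X\}$ let $G_i = \{j(i)\} \cup \{j(i,1),\ldots,j(i,X)\}$ consist of the $i$-th primary job together with its $X$ secondary jobs; these sets partition $J$, and the only conflicts inside $G_i$ are between $j(i)$ and each of its secondary jobs (the secondary jobs are pairwise non-conflicting, and distinct gadgets interact only through $j$, which lies outside $J$). I would first prove the clean per-gadget inequality that the total completion time of $G_i$ is at least $X(u_j+1)$, and then sum over the $X$ gadgets to obtain the first bound $(u_j+1)X^2$, using that the $G_i$ partition $J$.

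For the per-gadget bound, write $c = C(j(i))$; since $j(i)$ has length $X-u_j$, its occupied interval is $[c-(X-u_j),c)$, and every secondary job of $j(i)$ must lie entirely before or entirely after this interval. Letting $\alpha$ (resp.\ $\beta = X-\alpha$) count the secondary jobs scheduled after (resp.\ before) $j(i)$, each ``after'' job completes at time $\geq c+u_j$ and each ``before'' job at time $\geq u_j$, so the gadget total is at least $c(1+\alpha) + X u_j$. The key case distinction is on $\beta$: if $\beta \geq 1$, then some secondary job (of length $u_j$) precedes $j(i)$, forcing $j(i)$ to start no earlier than $u_j$ and hence $c \geq X$, which already gives $c(1+\alpha)+Xu_j \geq X + Xu_j = X(u_j+1)$; if $\beta = 0$, all $X$ secondary jobs follow $j(i)$ and the total is at least $(X-u_j)(1+X) + Xu_j = X^2 + X - u_j$, which exceeds $X(u_j+1)$ because $X = nk+1$ is large relative to $u_j \leq k$. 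This establishes the first bound.

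For the refined bound under $C(j) \in \{u_j+1,\ldots,X\}$, I would feed in the single extra constraint that every primary job $j(i)$ must avoid $j$'s unit interval $[C(j)-1,C(j))$, which is contained in $[u_j,X)$ precisely because $u_j+1 \le C(j) \le X$. This rules out the equality $c = X$ in the $\beta \geq 1$ case: $c = X$ would force $j(i)$ to occupy exactly $[u_j,X)$, overlapping $j$. Hence $c \geq X+1$ whenever $\beta \geq 1$, upgrading the gadget total to $\geq X(u_j+1)+1$; the $\beta = 0$ case is already well above this threshold. Summing the extra $+1$ over all $X$ gadgets yields the claimed $(u_j+1)X^2 + X$.

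The main obstacle is the per-gadget analysis rather than the summation: one must argue correctly that a secondary job scheduled before $j(i)$ pushes $j(i)$'s completion up to $X$ (and to $X+1$ once $j$ blocks the slot $[u_j,X)$), while simultaneously checking that the opposite regime $\beta = 0$ — where $j(i)$ runs first and drags all its secondary jobs past time $X$ — stays comfortably above the target. Both regimes rely only on the numerical slack provided by the choice $X = nk+1 \gg k \ge u_j$, so these estimates are routine. I note that minimality of $C$ is not actually needed here: the inequalities hold for every feasible schedule, and \emph{a fortiori} for minimal ones.
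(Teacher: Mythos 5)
Your proof is correct and follows essentially the same route as the paper's: a per-gadget case analysis on whether a secondary job can be scheduled before $j(i)$ (equivalently, whether $C(j(i))$ reaches $X$ or stays below it), combined with the numerical slack from $X = nk+1$ and $u_j \leq k$ with $n \geq 2$. Your side remark that minimality of $C$ is not actually needed is also accurate; the paper invokes minimality only to pin down exact completion times of the secondary jobs, which is superfluous for a lower bound.
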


\begin{proof}
Consider a primary upper bound job $j(i)$ for some $i \in \{1,\ldots,X\}$. If $C(j(i)) \geq X$, then by minimality of the schedule, all secondary upper bound jobs $j(i,1),\ldots,j(i,X)$ complete at the same time $u_j$. Hence, $C(j(i)) + \sum_{i_0=1}^{X} C(j(i,i_0)) \geq X(u_j+1)$. Note that if $C(j(i)) > X$, then this sum is strictly larger than $X(u_j+1)$. If $C(j(i)) < X$, then all secondary upper bound jobs $j(i,1),\ldots,j(i,X)$ conflict with $j(i)$ in the schedule and, by minimality of the schedule, complete at the same time $C(j(i,1)) = \cdots = C(j(i,X)) \geq X$. Hence, since $n \geq 2$, $C(j(i)) + \sum_{i_0=1}^{X} C(j(i,i_0)) \geq (X-u_j) + X\cdot X > X \cdot X > X \cdot (u_j+2) = X(u_j+1) + X$. It follows immediately that the total completion time of all jobs in~$J$ in $C$ is at least~$(u_j+1) \cdot X^2$.

Suppose that $C(j) \in \{u_j+1,\ldots,X-u_j\}$. Then all primary upper bound jobs conflict with $j$ in the schedule and complete at time larger than $X$. From the above, the total completion time of all jobs in~$J$ is at least~$(u_j+1) \cdot X^2 + X$.

Suppose that $C(j) \in \{X-u_j+1,\ldots,X\}$. Then either all primary upper bound jobs complete at time larger than $X$ or there is a primary upper bound job that completes before time $C(j)-1 < X$ (note that $C(j)-1 \geq X-u_j$). In the former case, as before, the total completion time of all jobs in~$J$ is at least~$(u_j+1) \cdot X^2 + X$. If there is a primary upper bound job (say $j(i)$) that completes before time $C(j)-1 < X$, then by the above, the total completion time of all jobs in~$J$ is at least~$(u_j+1) \cdot X^2 + X$ as well.
\qed
\end{proof}

\begin{lemma}
\label{lem:LowerBoundJobs}
Let $C$ be a minimal feasible schedule for the constructed \threefield{P_n}{conc}{\sum C_j} instance. Consider some $j \in V_0$ with $\chi(j) >1$, and let $J$ denote the set of lower bound jobs of $j$. Then the total completion time of all jobs in $J$ in $C$ is at least $(\ell_j-1) \cdot X$. Moreover, if $C(j) < \ell_j$, then the total completion time of all jobs in~$J$ is at least $\ell_j \cdot X$.
\end{lemma}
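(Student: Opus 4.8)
The plan is to bound the completion time of each lower bound job individually and then sum over the $X$ jobs in $J$. Recall that $J=\{j^*(1),\ldots,j^*(X)\}$, where each $j^*(i)$ has processing time $\ell_j-1$ and is in conflict only with $j$, while $p_j=1$. For the first bound I would simply use that start times are nonnegative: feasibility forces $C(j^*(i))-(\ell_j-1)\geq 0$, hence $C(j^*(i))\geq \ell_j-1$ for every $i$, and summing over the $X$ jobs yields $\sum_{i=1}^{X} C(j^*(i)) \geq (\ell_j-1)\cdot X$. I would note that minimality is not actually needed here—feasibility alone suffices for both parts—though stating the lemma for minimal schedules keeps it uniform with Lemma~\ref{lem:UpperBoundJobs}.

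For the second part I would assume $C(j)<\ell_j$, i.e.\ $C(j)\leq \ell_j-1$, and argue that there is simply no room to place a length-$(\ell_j-1)$ job before $j$. Concretely, job $j$ occupies the half-open interval $[C(j)-1,C(j))$, which lies inside $[0,\ell_j-1)$. Each lower bound job $j^*(i)$ must occupy an interval of length $\ell_j-1$ disjoint from $j$'s interval, so it is scheduled either entirely before or entirely after $j$. The only free block preceding $j$ is $[0,C(j)-1)$, whose length $C(j)-1\leq \ell_j-2$ is strictly less than $\ell_j-1$; thus $j^*(i)$ cannot fit there. Consequently each $j^*(i)$ must start no earlier than $C(j)$, giving $C(j^*(i))-(\ell_j-1)\geq C(j)$, and since $p_j=1$ forces $C(j)\geq 1$, we obtain $C(j^*(i))\geq C(j)+\ell_j-1\geq \ell_j$ for every $i$. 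Summing over the $X$ jobs then yields $\sum_{i=1}^{X} C(j^*(i)) \geq \ell_j\cdot X$.

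So the key steps, in order, are: (i) read off the per-job bound $\ell_j-1$ from the processing time and sum to get the first inequality; (ii) under the hypothesis $C(j)<\ell_j$, compare the length of the available window $[0,C(j)-1)$ with the job's processing time to rule out any lower bound job preceding $j$; (iii) conclude each such job starts at or after $C(j)$ and thus completes at time at least $\ell_j$. The hard part will be the interval bookkeeping in step (ii): one must apply the half-open interval convention correctly and use $C(j)\leq \ell_j-1$ together with $C(j)\geq 1$ to see that the window before $j$ is strictly shorter than $\ell_j-1$. Everything else is direct arithmetic.
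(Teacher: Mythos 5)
Your proof is correct and follows essentially the same route as the paper's: the first bound comes from each lower bound job's completion time being at least its processing time $\ell_j-1$, and the second from observing that a length-$(\ell_j-1)$ job conflicting with $j$ cannot fit before $j$'s unit interval when $C(j)\leq \ell_j-1$, forcing completion at time at least $\ell_j$. Your interval bookkeeping just makes explicit what the paper states tersely, and your remark that minimality is not needed for this lemma is accurate.
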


\begin{proof}
The total completion time of $J$ is at least $(\ell_j-1) \cdot X$ because this is the total processing time of $J$. If $C(j) < \ell_j$, then the completion time of any job in $J$ is at least $\ell_j$, as all jobs in $J$ are conflicting with $j$ and have processing time $\ell_j-1$. Since $|J|=X$, the total completion time of $J$ is at least $\ell_j \cdot X$.   
\qed
\end{proof}

Combining Lemma~\ref{lem:UpperBoundJobs} with Lemma~\ref{lem:LowerBoundJobs} we obtain the following: 

\begin{lemma}
\label{lem:BackwardDirectionSumCj}%
If there is a minimal feasible schedule~$C$ for the constructed \threefield{P_n}{conc}{\sum C_j} instance with $\sum C_j \leq K$, then there is a solution $\chi': \{1,\ldots,n\} \to \{1,\ldots,k\}$ for $(H,k,\chi)$.   
\end{lemma}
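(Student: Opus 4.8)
The plan is to take the given minimal feasible schedule $C$ and define $\chi'(j) = C(j)$ for every original job $j \in \{1,\ldots,n\}$; I then need to verify three things: that $\chi'$ is a proper coloring of $H$, that $\chi'(j) \in \{1,\ldots,k\}$ for all $j$, and that $\chi'(j) = \chi(j)$ for all $j \in V_0$. Properness is immediate, since the original jobs have unit processing time and their mutual conflicts are exactly the edges of $H$, so feasibility of $C$ forces $C(i) \neq C(j)$ whenever $\{i,j\} \in F$. The remaining two conditions both reduce to establishing the sandwich $\ell_j \leq C(j) \leq u_j$ for every $j$, since $\ell_j = u_j = \chi(j)$ for $j \in V_0$ while $\ell_j = 1$ and $u_j = k$ for $j \notin V_0$.

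First I would set up the global budget. Writing $\sum C_j$ as the contribution of the original jobs plus the contributions of the upper and lower bound jobs, and plugging in the per-job lower bounds $(u_j+1)X^2$ from Lemma~\ref{lem:UpperBoundJobs} and $(\ell_j-1)X$ from Lemma~\ref{lem:LowerBoundJobs}, the hypothesis $\sum C_j \leq K$ collapses to $\sum_{j=1}^n C(j) \leq nk$. Since each $C(j) \geq p_j = 1$, this already forces $C(j) \leq nk - (n-1) < nk < X$ for every $j$; in particular no original completion time ever reaches $X = nk+1$. This containment is precisely what lets me invoke the ``moreover'' clauses of the two lemmas cleanly.

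The heart of the argument is a charging step. Let $A = \{j : C(j) > u_j\}$ and $B = \{j : C(j) < \ell_j\}$. For $j \in A$, the bound $C(j) < X$ places $C(j)$ in the range $\{u_j+1,\ldots,X\}$, so Lemma~\ref{lem:UpperBoundJobs} upgrades the total completion time of $j$'s upper bound jobs to at least $(u_j+1)X^2 + X$, an extra $X$ over the baseline. For $j \in B$ the lower bound jobs exist only when $j \in V_0$ and $\chi(j) > 1$ (for all other $j$ we have $\ell_j = 1$, making $C(j) < \ell_j$ impossible), and there Lemma~\ref{lem:LowerBoundJobs} upgrades their contribution to at least $\ell_j X = (\ell_j-1)X + X$, again an extra $X$. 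Re-running the budget computation with these surcharges yields $\sum_{j=1}^n C(j) + (|A|+|B|)\,X \leq nk$, and since $\sum_{j=1}^n C(j) \geq n$ this forces $(|A|+|B|)X \leq nk - n < X$, hence $|A| + |B| = 0$. Thus $\ell_j \leq C(j) \leq u_j$ for all $j$, which gives $C(j) = \chi(j)$ on $V_0$ and $C(j) \in \{1,\ldots,k\}$ elsewhere, so $\chi'$ is a valid solution for $(H,k,\chi)$.

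The main obstacle I anticipate is coordinating the two ``extra $X$'' penalties with the containment $C(j) < X$: the surcharge in Lemma~\ref{lem:UpperBoundJobs} is only guaranteed in the window $\{u_j+1,\ldots,X\}$, so I must first secure $C(j) < X$ from the baseline budget before I am entitled to charge the upper-violating jobs. One must also check that the penalties from $A$ and $B$ are genuinely additive and never double-counted; they live in disjoint job families (upper bound jobs versus lower bound jobs), so summing the two lemmas' lower bounds over all $j$ is legitimate. Everything else is routine bookkeeping with the definition $X = nk+1$.
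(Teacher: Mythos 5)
Your proposal is correct and follows essentially the same route as the paper: decompose the budget $K$ into the baseline contributions $(1+u_j)X^2$ and $(\ell_j-1)X$ of the auxiliary jobs, deduce $C(j)<X$ from the leftover slack $nk<X$, and then invoke the ``moreover'' clauses of Lemma~\ref{lem:UpperBoundJobs} and Lemma~\ref{lem:LowerBoundJobs} to rule out any violation of $\ell_j \le C(j) \le u_j$. Your explicit charging over the sets $A$ and $B$ is just a slightly more detailed write-up of the paper's one-line conclusion, so no substantive difference.
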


\begin{proof}
Let $C$ be a minimal feasible schedule for the constructed \threefield{P_n}{conc}{\sum C_j} instance with total completion time at most $K$. By Lemma~\ref{lem:UpperBoundJobs} and Lemma~\ref{lem:LowerBoundJobs}, the total completion time of all jobs other than $\{1,\ldots,n\}$ in $C$ is at least 
$
K' = \sum^n_{j=1} \Big[ (1+u_j) \cdot X^2 + (\ell_j-1) \cdot X \Big]
$.
Since $K-K' = nk < X$, we have that $C(j) < X$. Therefore, $C(j) \leq u_j$ for all $j \in \{1,\ldots,n\}$ by Lemma~\ref{lem:UpperBoundJobs}. Also, $C(j) \geq \ell_j$ for all $j \in V_0$ by Lemma~\ref{lem:LowerBoundJobs}. It follows that the restriction of $C$ onto $\{1,\ldots,n\}$ is a proper extension of the coloring $\chi$ onto $\{1,\ldots,n\}$.
\qed
\end{proof}

The reduction above can be carried out in polynomial time and an additional $O(\log n)$ working space. Moreover, $tw(G) = tw(H)$. Correctness of the reduction is given by Lemma~\ref{lem:ForwardDirectionSumCj} and Lemma~\ref{lem:BackwardDirectionSumCj}. This yields a parameterized logspace reduction from the XALP-hard \textsc{Precoloring Extension} to \threefield{P_n}{conc}{\sum C_j}. 
Membership in XALP follows again from Corollary~\ref{corollary:inXALP}, and so: 

\begin{theorem}
\threefield{P_n}{conc}{\sum C_j} parameterized by treewidth is XALP-complete.
\end{theorem}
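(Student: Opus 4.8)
The plan is to assemble the theorem from the two standard halves of an XALP-completeness proof: membership, which follows almost immediately from Corollary~\ref{corollary:inXALP}, and hardness, which is delivered by the reduction constructed just above together with Lemmas~\ref{lem:ForwardDirectionSumCj} and~\ref{lem:BackwardDirectionSumCj}. For membership, I would observe that \threefield{P_n}{conc}{\sum C_j} is the special case of \threefield{P_n}{conc,r_j \geq 0}{\sum w_j T_j} obtained by setting $r_j=0$, $w_j=1$, and $d_j=0$ for every job, so that $T_j=\max\{0,C_j\}=C_j$ and $\sum w_jT_j=\sum C_j$. Corollary~\ref{corollary:inXALP} then places the problem in XALP with treewidth as parameter without further argument.

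For hardness, I would collect the three ingredients already prepared during the construction: (i) the reduction runs in polynomial time using only $O(\log n)$ additional working space, so it qualifies as a parameterized logspace reduction; (ii) $tw(G)=tw(H)$, as noted when defining $G$ from $H$, so the parameter is preserved up to the additive constant allowed by such reductions; and (iii) correctness in both directions. The ``yes'' direction is exactly Lemma~\ref{lem:ForwardDirectionSumCj}: a solution $\chi'$ to $(H,k,\chi)$ yields a feasible schedule with $\sum C_j\leq K$.

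The one gap I would need to bridge is that Lemma~\ref{lem:BackwardDirectionSumCj} is stated only for \emph{minimal} feasible schedules, whereas the decision problem asks whether \emph{some} feasible schedule attains $\sum C_j\leq K$. Here I would use that $\sum C_j$ is a regular criterion: starting from any feasible schedule with $\sum C_j\leq K$ and repeatedly decreasing the completion time of a job whenever this keeps the schedule feasible, the process terminates (completion times are positive integers bounded below by the processing times) at a minimal feasible schedule whose total completion time is no larger than $K$. Applying Lemma~\ref{lem:BackwardDirectionSumCj} to this minimal schedule then produces the required solution $\chi'$ for $(H,k,\chi)$, converting the minimal-schedule backward lemma into the any-schedule statement the reduction needs.

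Combining hardness with membership yields XALP-completeness. The only genuine obstacle is the minimality bridge in the previous paragraph; everything else is a direct citation of the preceding results, since the substantive combinatorial work already lives in Lemmas~\ref{lem:UpperBoundJobs} and~\ref{lem:LowerBoundJobs}, which pin down the completion times of the upper and lower bound gadgets. I would therefore keep the write-up short, stating the membership special-casing in one sentence, the minimality reduction in one sentence, and then invoking the forward and backward lemmas to close both directions.
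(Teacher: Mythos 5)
Your proposal matches the paper's proof exactly: membership via Corollary~\ref{corollary:inXALP}, hardness via the parameterized logspace reduction from \textsc{Precoloring Extension} with correctness supplied by Lemmas~\ref{lem:ForwardDirectionSumCj} and~\ref{lem:BackwardDirectionSumCj}, and the observation that $tw(G)=tw(H)$. Your explicit bridge from arbitrary feasible schedules to minimal ones (needed because Lemma~\ref{lem:BackwardDirectionSumCj} is stated for minimal schedules) is correct and fills a step the paper leaves implicit.
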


\section{Non-Zero Release Times}
\label{section:non-zeroreleasetimes}

We now consider the case where each job $j$ has an integer release time $r_j \geq 0$ and in a feasible schedule $C$ we require that $C(j) \geq r_j + p_j$ for all jobs~$j$. We show that the complexity of the concurrency scheduling problems considered above substantially differs from the $r_j=0$ case. As all problems are XALP-complete for arbitrary processing times, we only consider the unit processing time case.

We begin with makespan objective. By a simple reduction to the \threefield{P_n}{conc,p_j=1}{L_{\max}} problem, by reversing the order of schedules, we obtain:

\begin{theorem}
\label{thm:UnitProcHard}%
\threefield{P_n}{conc,p_j=1,r_j \geq 0}{C_{\max}} is solvable in $O^*((tw \cdot \log n+1)^{tw+1})$-time.
\end{theorem}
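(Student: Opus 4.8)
The plan is to reduce the decision version of \threefield{P_n}{conc,p_j=1,r_j \geq 0}{C_{\max}} --- namely, ``is there a feasible schedule with $C_{\max} \leq k$?'' --- to deciding whether an instance of \threefield{P_n}{conc,p_j=1}{L_{\max}} on the \emph{same} conflict graph admits a schedule with $L_{\max} \leq 0$, and then to binary search over $k$. The transformation is a reversal of the time axis with horizon $k$. Given the release-time instance and a target makespan $k$, I would keep the conflict graph $G$ and the unit processing times unchanged, drop the release times entirely, and instead assign to each job $j$ the due date $d_j = k - r_j$. The intuition is that a unit-length schedule $C$ with $1 \le C_j \le k$ corresponds, via $\bar{C}_j = k+1-C_j$, to another proper coloring of $G$ (reversal preserves the ``distinct colours'' condition), under which the hard release constraint $C_j \ge r_j + 1$ becomes exactly $\bar{C}_j \le k - r_j = d_j$, i.e.\ $L_j = \bar{C}_j - d_j \le 0$.

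The first step is to verify this equivalence in both directions. If the original instance has a feasible schedule $C$ with $C_{\max} \le k$, then $\bar{C}_j = k+1-C_j$ is a proper colouring with $\bar{C}_j \le d_j$ for every $j$, so this single schedule already witnesses $L_{\max} \le 0$. Conversely, if the $L_{\max}$ instance admits a schedule $\bar{C}$ with $L_{\max} \le 0$, then $1 \le \bar{C}_j \le d_j = k - r_j \le k$, so $C_j = k+1-\bar{C}_j$ satisfies $1 \le C_j \le k$ and $C_j \ge r_j + 1$, giving a feasible makespan-$k$ schedule for the original. The boundary cases are routine: a job with $r_j \ge k$ forces $d_j \le 0$, hence $L_j \ge \bar{C}_j \ge 1 > 0$ in every schedule, correctly reporting that no makespan-$k$ schedule exists (such a job cannot complete by time $k$).

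To solve the resulting $L_{\max}$ instance I would invoke Theorem~\ref{thm:UnitProc} with all weights equal to one, which decides \threefield{P_n}{conc,p_j=1}{L_{\max}} in $O^*((tw \cdot \log n + 1)^{tw+1})$ time. Two points matter here: the reversal leaves the conflict graph, and hence $tw$, untouched; and $L_{\max}$ is a \emph{regular} criterion, so Corollary~\ref{cor:MinimalUnitProc} bounds the makespan of some optimal $L_{\max}$ schedule by $tw \cdot \log n + 1$ \emph{regardless} of the (possibly non-positive) due dates $d_j$. Consequently the dynamic program of Theorem~\ref{thm:TreewidthDP} computes the true optimum $L_{\max}$ even though it only explores schedules of makespan at most $tw \cdot \log n + 1$. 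To obtain the optimum makespan rather than a single decision, I would binary search over $k$; the search range is polynomial because scheduling every job in a proper $(tw+1)$-colouring shifted past $\max_j r_j$ gives a feasible schedule, so the optimum makespan is at most $\max_j r_j + tw + 1$, which is polynomial in $|I|$ since all parameters are given in unary. Each of the $O(\log|I|)$ decision calls costs $O^*((tw \cdot \log n + 1)^{tw+1})$, and the $O^*$ notation absorbs this logarithmic overhead.

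The main obstacle I anticipate is not any single step but the bookkeeping of the reversal under boundary conditions --- in particular making sure the equivalence survives when $k$ is smaller than the makespan $tw \cdot \log n + 1$ that Corollary~\ref{cor:MinimalUnitProc} permits the $L_{\max}$ solver to use internally, and when some due dates are non-positive. Both are resolved by the single observation that $L_{\max} \le 0$ forces $\bar{C}_j \le d_j \le k$ for every job, so the reversed completion times automatically lie in $\{1,\ldots,k\}$; the internal makespan bound of the solver therefore never interferes with translating a schedule back and forth, and the global minimum of $L_{\max}$ (being attained within makespan $tw \cdot \log n + 1$) is computed exactly.
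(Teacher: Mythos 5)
Your proposal is correct and follows essentially the same route as the paper: the paper also reduces to \threefield{P_n}{conc,p_j=1}{L_{\max}} via $d_j = k - r_j$ and the time reversal $C'(j) = k - C(j) + 1$, then invokes Theorem~\ref{thm:UnitProc}. Your additional care about non-positive due dates and the search over $k$ is sound but not essentially different from the paper's argument.
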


\begin{proof}
Let $G=(\{1,\ldots,n\},E)$ be a given \threefield{P_n}{conc,p_j=1,r_j \geq 0}{C_{\max}} instance with $tw(G)=tw$, and let $k$ be the given makespan bound. Construct an instance of \threefield{P_n}{conc,p_j=1}{L_{\max}} with job set $\{1,\ldots,n\}$ and the same conflict graph~$G$, by setting $d_j=k-r_j$. It is not difficult to see that if $C$ is a feasible schedule for the constructed \threefield{P_n}{conc,p_j=1}{L_{\max}} instance with $L_{\max}=0$, then the schedule $C'$ obtained by setting $C'(j)=k-C(j)+1$ for each job~$j$ is feasible for the given \threefield{P_n}{conc,p_j=1,r_j \geq 0}{C_{\max}} instance, as $r_j +1 \leq C'(j)  \leq k$ for each job~$j$. 
Conversely, if $C'$ is a feasible schedule for the given \threefield{P_n}{conc,p_j=1,r_j \geq 0}{C_{\max}} instance, then the schedule $C$ we obtain by setting $C(j) = k-C'(j)+1$ is feasible for the constructed \threefield{P_n}{conc,p_j=1}{L_{\max}} instance with $L_{\max}=0$. 
Thus, using Theorem~\ref{thm:UnitProc}, we get an algorithm for \threefield{P_n}{conc,p_j=1,r_j \geq 0}{C_{\max}} running in $O^*((tw \cdot \log n+1)^{tw+1})$-time. 
\qed
\end{proof}

We next consider the $L_{\max}$ and $WC_{\max}$ objectives. For both of these objectives we previously obtained FPT algorithms when all jobs are released at time~$0$. However, when this is not true, the problems become XALP-complete:

\begin{theorem}
\threefield{P_n}{conc, p_j =1, r_j \geq 0}{L_{\max}} and \threefield{P_n}{conc, p_j =1, r_j \geq 0}{WC_{\max}} are XALP-complete parameterized by treewidth.  
\label{theorem:x1}
\end{theorem}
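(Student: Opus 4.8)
The plan is to prove XALP-completeness by establishing both membership and hardness. Membership in XALP follows immediately from Corollary~\ref{corollary:inXALP}, since both $L_{\max}$ and $WC_{\max}$ are special cases of the $WL_{\max}$ objective (taking unit weights gives $L_{\max}$, and setting all due dates $d_j=0$ gives $WC_{\max}$), and since the problems here are restrictions of the more general $r_j \geq 0$ setting already shown to be in XALP. So the entire burden of the proof is on XALP-hardness.

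For hardness, I would reduce from \textsc{Precoloring Extension}, exactly as in the previous two sections, since that problem is XALP-complete parameterized by treewidth and the reduction only needs to be a parameterized logspace reduction preserving treewidth up to an additive constant. Given an instance $(H,k,\chi)$ with $H=(V,F)$ and precoloring $\chi:V_0 \to \{1,\ldots,k\}$, I would start with the conflict graph $G=H$, give each job $p_j=1$, and use the release times (which are the new ingredient not available in the $C_{\max}$ reduction) to pin down the completion time of each precolored vertex. The natural idea is to set a makespan-type target and then, for each $j \in V_0$, force $C_j=\chi(j)$. For $L_{\max}$, I would set all due dates to $k$ (so that a schedule has $L_{\max} \leq 0$ exactly when its makespan is at most $k$, turning the lateness objective into a makespan constraint), and then impose $r_j = \chi(j)-1$ on each precolored job $j \in V_0$. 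Since $p_j=1$ and a feasible schedule requires $C_j \geq r_j + p_j = \chi(j)$, the release time forces $C_j \geq \chi(j)$ from below; combined with the upper bound $C_j \leq k$ coming from $L_{\max} \leq 0$, I still need a mechanism forcing $C_j \leq \chi(j)$ from above. One clean way is to introduce, for each $j \in V_0$, a small gadget of auxiliary jobs conflicting with $j$ and released appropriately so that all colors in $\{\chi(j)+1,\ldots,k\}$ are blocked for $j$; alternatively, one can reverse the schedule and dualize, reusing the delimiter-style argument from Lemma~\ref{lem:delimiters}. The target value is then $L_{\max}=0$ (equivalently the schedule has makespan $\leq k$), and feasible schedules correspond precisely to extensions of $\chi$.

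For the $WC_{\max}$ objective the approach is analogous: I would again set $G=H$, $p_j=1$, and $r_j=\chi(j)-1$ for $j\in V_0$ to force $C_j \geq \chi(j)$ from below. To simulate the upper cap, I would assign a large weight $w_j$ to each precolored job so that, under a suitable threshold $T$ on $WC_{\max}$, the inequality $w_j C_j \leq T$ forces $C_j \leq \chi(j)$ from above, again pinning $C_j=\chi(j)$. All non-precolored vertices get small (say unit) weight and release time $0$, so they are free to take any color in $\{1,\ldots,k\}$. The correctness argument splits, as usual, into a forward direction (any extension $\chi'$ yields a schedule meeting the objective threshold) and a backward direction (any schedule meeting the threshold restricts to a proper extension of $\chi$), both following the same template as Lemmas~\ref{lem:ForwardDirectionCmax} and~\ref{lem:BackwardDirectionCmax}. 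The treewidth bound $tw(G)=O(tw(H))$ holds because we add only constant-size gadgets per vertex, so the reduction is parameter-preserving and runs in FPT time with $O(\log n)$ workspace.

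The main obstacle I anticipate is arranging the \emph{upper} bound $C_j \leq \chi(j)$ cleanly. Release times only push completion times up, and the objective thresholds ($L_{\max}=0$ or the $WC_{\max}$ cap) give a global makespan-style ceiling at $k$, but getting an individual, per-job ceiling exactly at $\chi(j)$ requires care. For $WC_{\max}$ the weight trick is the clean lever: choosing $w_j$ so that $w_j\cdot(\chi(j)+1) > T \geq w_j\cdot\chi(j)$ forces the cap, though one must verify these weight choices are mutually consistent across all precolored jobs under a single threshold $T$ (which likely requires a common large multiplier and an additive budget for the free jobs, in the spirit of the $X=nk+1$ bookkeeping from the $\sum C_j$ reduction). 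For $L_{\max}$, where there are no weights to exploit, the cleaner route is probably the schedule-reversal/dualization idea, mirroring the two-delimiter construction of the makespan reduction so that pinning from above becomes pinning from below in the reversed schedule. Verifying that exactly one of these mechanisms forces equality $C_j=\chi(j)$, and that the treewidth and space bounds are preserved, will be the technical heart of the argument.
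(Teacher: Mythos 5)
Your membership argument and your $WC_{\max}$ reduction essentially match the paper's: for $WC_{\max}$ the paper sets $G=H$, $p_j=1$, $r_j=\chi(j)$ and $w_j=\lfloor k/(\chi(j)+1)\rfloor$ for $j\in V_0$ (unit weight, zero release time otherwise) and uses the single threshold $k$; your worry about making the weight choices mutually consistent under one threshold is the right thing to check, but it is handled exactly by this kind of per-job choice of $w_j$ against a common bound $k$, with no need for the $X=nk+1$ bookkeeping (the objective is a maximum, not a sum, so the free jobs contribute nothing that must be budgeted).

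The gap is in the $L_{\max}$ half. You set \emph{all} due dates to $k$, which collapses the objective to a global makespan ceiling and leaves you, as you yourself note, with no per-job upper bound on $C_j$. But due dates are per-job parameters of the $L_{\max}$ problem: the paper simply sets $d_j=\chi(j)+1$ for each $j\in V_0$ and $d_j=k$ otherwise, so that $L_{\max}\leq 0$ directly forces $C_j\leq \chi(j)+1$ while the release time $r_j=\chi(j)$ forces $C_j\geq r_j+p_j=\chi(j)+1$, pinning every precolored job with no auxiliary jobs at all (the conflict graph stays $G=H$). Neither of your proposed workarounds is viable as stated: the delimiter construction of Lemma~\ref{lem:delimiters} relies on jobs of processing time $k-1$, which are unavailable when $p_j=1$; the ``blocking gadget'' needs auxiliary jobs pinned to each color in $\{\chi(j)+1,\ldots,k\}$, which either circularly presupposes the very per-job upper-bound mechanism you are trying to build or requires a cascade of $\Theta(k)$ mutually conflicting auxiliary jobs, destroying the treewidth bound since $k$ is not bounded in terms of $tw(H)$ in the hard instances of \textsc{Precoloring Extension}; and schedule reversal merely exchanges release times for deadlines rather than supplying both sides of the pin. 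The missing idea is simply that the due date is an individual job attribute you are free to set per vertex.
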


\begin{proof}
Membership follows again from Corollary~\ref{corollary:inXALP}.
For the hardness of both problems we present reductions from \textsc{Precoloring Extension}. Let $(H,k,C)$ be a given instance of \textsc{Precoloring Extension} with $H=(\{1,\ldots,n\},F)$, $k$ an integer, and $\chi: V_0 \to \{1,\ldots,k\}$ a given precoloring of $H$ defined on some $V_0 \subseteq V$.

For \threefield{P_n}{conc, p_j =1, r_j \geq 0}{L_{\max}}, we construct an instance $G=(\{1,\ldots,n\},E)$ by setting $G=H$, and assigning each job~$j \in V_0$ a release time of~$r_j = \chi(j)$ and due date~$d_j=\chi(j)+1$. All other jobs~$j \notin V_0$ are assigned a release date of~$r_j=0$ and due date~$d_j=k$. Clearly, in any feasible schedule~$C$ with~$L_{\max} \leq 0$ we have $C(j)=\chi(j)$ for all $j \in V_0$, and $1 \leq C(j) \leq k$ for all $j \notin V_0$. Thus, since $G=H$, any feasible schedule with $L_{\max} \leq 0$ corresponds to a solution for the \textsc{Precoloring Extension} instance, and vice versa.

For \threefield{P_n}{conc, p_j =1, r_j \geq 0}{WC_{\max}}, we construct an instance $G=(\{1,\ldots,n\},E)$ by setting $G=H$, and assigning each job~$j \in V_0$ a release time of~$r_j = \chi(j)$ and a weight of~$w_j=\lfloor k/(\chi(j)+1) \rfloor$. All other jobs $j \notin V_0$ are assigned a release date of~$r_j=0$ and weight~$w_j=1$. It is easy to see that in any feasible schedule $C$ with $WC_{\max} \leq k$ we have $C(j)=\chi(j)$ for all $j \in V_0$, and $1 \leq C(j) \leq k$ for all~$j \notin V_0$. Thus, since $G=H$, any feasible schedule with $WC_{\max} \leq k$ corresponds to a solution for the \textsc{Precoloring Extension} instance, and vice versa.

As both constructions correspond to logspace parameterized reductions, the theorem follows.
\qed
\end{proof}

Finally, we consider the $\sum C_j$ objective.

\begin{theorem}
\threefield{P_n}{conc, p_j =1, r_j \geq 0}{\sum C_j} is XALP-complete parameterized by treewidth.  
\label{theorem:x2}
\end{theorem}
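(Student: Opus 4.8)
The plan is to get membership for free from Corollary~\ref{corollary:inXALP}, since \threefield{P_n}{conc,r_j \geq 0}{\sum C_j} is the special case of \threefield{P_n}{conc,r_j \geq 0}{\sum w_jT_j} with $w_j=1$ and $d_j=0$ (so that $T_j=C_j$). For hardness I would give a parameterized logspace reduction from \textsc{Precoloring Extension}, parallel to the two reductions earlier in this section. Given $(H,k,\chi)$ with $H=(\{1,\ldots,n\},F)$ and $\chi:V_0\to\{1,\ldots,k\}$, I set $G=H$ with $p_j=1$ for all $j\in\{1,\ldots,n\}$, and encode the precoloring by a combination of release times and auxiliary \emph{blocker} jobs. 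The goal is to force $C(j)=\chi(j)$ for every $j\in V_0$ and $C(j)\in\{1,\ldots,k\}$ for every $j\notin V_0$, so that a feasible schedule of small total completion time restricted to $V$ is exactly a proper $k$-coloring extending $\chi$.

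Lower bounds are immediate: I set $r_j=\chi(j)-1$ for $j\in V_0$, forcing $C(j)\ge\chi(j)$, and $r_j=0$ otherwise. The main obstacle is forcing the \emph{upper} bounds with unit processing times. Unlike the $L_{\max}$ and $WC_{\max}$ reductions, the $\sum C_j$ objective does not penalize a job for completing at a specific late time, and release times only ever push completion times upward; hence a job whose auxiliary neighbors are all released late can simply slide up into a vacated slot for free, so naive ``pressure'' gadgets fail. I would resolve this with \emph{blocker bundles}. Fix sufficiently large polynomials $N$ (say $N=nk+k+1$) and $X$ (say $X=nk+1$). For each job $j$, write $t_j=\chi(j)$ if $j\in V_0$ and $t_j=k$ otherwise, and for each slot $s\in\{t_j+1,\ldots,N\}$ introduce $X$ blocker jobs, each conflicting only with $j$, pairwise non-adjacent, with release time $s-1$ so each completes at time $\ge s$. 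Since $j$ together with its blockers induces a star, this raises treewidth by at most one; as each gadget touches the rest of $G$ only through $j$, we get $tw(G)=tw(H)$ and $pw(G)\le pw(H)+O(1)$.

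The key quantitative fact is that the blockers of $j$ always contribute at least $B_j:=X\sum_{s=t_j+1}^{N}s$, with equality attainable exactly when $C(j)\notin\{t_j+1,\ldots,N\}$ (every slot-$s$ bundle then completing at $s$); whereas if $C(j)=s_0$ for some $s_0\in\{t_j+1,\ldots,N\}$, the $X$ blockers of slot $s_0$ conflict with $j$ and are forced to time $\ge s_0+1$, so the blockers of $j$ contribute at least $B_j+X$. I would then set the required total completion time to $K:=\sum_j B_j+\sum_{j\in V_0}\chi(j)+k\cdot|V\setminus V_0|$. Against this, every feasible schedule satisfies the global lower bound $\sum_j B_j+\sum_{j\in V_0}\chi(j)+|V\setminus V_0|$, so the slack is $K$ minus this bound, namely $(k-1)\,|V\setminus V_0|$, which is strictly less than both $X$ and $N-k$.

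The two directions then close the proof. Forward: a valid extension $\chi'$ gives the schedule $C(j)=\chi'(j)$ on $V$ with every blocker at its release slot; this is feasible (all blockers of $j$ sit at slots strictly above $C(j)\le k$) and has total completion time $\sum_j B_j+\sum_{V_0}\chi(j)+\sum_{V\setminus V_0}\chi'(j)\le K$. Backward: take any feasible schedule with $\sum C_j\le K$. If some $j$ had $C(j)\in\{t_j+1,\ldots,N\}$, its blockers alone would exceed their minimum by $X>$ slack; if some $j$ had $C(j)>N$, that single term would exceed $t_j$ by more than $N-k>$ slack; either case forces $\sum C_j>K$, a contradiction. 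Hence $C(j)=\chi(j)$ for all $j\in V_0$ and $C(j)\le k$ for all $j\notin V_0$, and since $G=H$ and $C$ is feasible, the restriction of $C$ to $V$ is a proper $k$-coloring extending $\chi$. The construction uses polynomial time and $O(\log n)$ extra space, giving a parameterized logspace reduction; combined with membership this yields XALP-completeness, and the XNLP-completeness statement for pathwidth follows identically from the XNLP-hardness of \textsc{Precoloring Extension}. The step I expect to require the most care is verifying the blocker-cost bound $B_j+X$ and the slack inequalities robustly enough that they hold for \emph{every} feasible schedule, not merely an optimal one.
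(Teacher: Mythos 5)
Your proposal is correct and follows essentially the same route as the paper: membership via Corollary~\ref{corollary:inXALP}, and hardness by reducing from \textsc{Precoloring Extension} with release times enforcing the lower bounds and, for each forbidden slot above $u_j$, a bundle of $X$ unit jobs conflicting only with $j$ and released just before that slot, so that placing $j$ in a forbidden slot costs an extra $X$ exceeding the budget slack. The only (harmless) differences are bookkeeping: your slot range runs up to $N$ instead of $X-1$ and your counting argument dispenses with the paper's appeal to minimality of the schedule.
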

\begin{proof}
The proof is via a reduction from \textsc{Precoloring Extension}. Let $(H,k,\chi)$ be a given instance of \textsc{Precoloring Extension} with $H=(\{1,\ldots,n\},F)$, $k$ an integer, and $\chi: V_0 \to \{1,\ldots,k\}$ a given precoloring of $H$ defined on some $V_0 \subseteq V$. 
We may assume that $H$ has at least one edge.
Define $u_j = \chi(j)$ for each $j \in V_0$, and $u_j = k$ for each $j \in \{1,\ldots,n\} \setminus V_0$. Furthermore, define $\ell_j = \chi(j)-1$ for each $j \in V_0$, and $\ell_j = 0$ for each $j \in \{1,\ldots,n\} \setminus V_0$. Then $\chi':\{1,\ldots,k\}$ is a solution for $(H,k,C)$ if and only if it is a proper coloring of $H$ with $\ell_j < \chi'(j) \leq u_j$ for all~$j \in \{1,\ldots,n\}$. Let $X=nk+1$.

We construct an instance of \threefield{P_n}{conc, p_j =1, r_j \geq 0}{\sum C_j} by initially starting with the set of $n$ jobs $\{1,\ldots,n\}$, corresponding to the vertices of~$H$, and a conflict graph~$G=H$. We set the release time of each job $j \in \{1,\ldots,n\}$ to~$r_j=\ell_j$. We then add, for each $j \in \{1,\ldots,n\}$, a set of upper bound jobs. For $i \in \{u_j,\ldots,X-2\}$, add jobs $j(i,1),\ldots,j(i,X)$ each released at time $i$. Finally, we add the edges $\{j,j(i,i_0)\}$ to $G$ for each $j \in \{1,\ldots,n\}$, $i \in \{u_j,\ldots,X\}$, and $i_0 \in \{1,\ldots,X\}$. We set the required total completion time of all jobs to be 
$$
K = nk + X \cdot \sum^n_{j=1} \sum_{i=u_j}^{X-2} (i+1).
$$
Note that $tw(G) = tw(H)$, as one only needs to append nodes of size~$2$ to a tree decomposition of $H$ to obtain a tree decomposition of $G$. (Also note that $pw(G) \leq pw(H)+1$.) 
We claim that there exists a feasible minimal schedule $C$ for the constructed set of jobs with total completion time at most $K$ if and only if there exists a solution $\chi'$ for $(H,k,\chi)$.

Suppose there exists a solution $\chi'$ for $(H,k,\chi)$. Then it is easy to verify that the schedule $C$ with, for all $j \in \{1,\ldots,n\}$, $C(j)=\chi'(j)$ and $C(j(i,i_0))=i+1$ for all $i \in \{u_j,\ldots,X-2\}$ and $i_0 \in \{1,\ldots,X\}$, is feasible and has total completion time at most $K$.

Suppose there exists a minimal feasible schedule $C$ for the constructed instance with total completion time at most $K$. Note that for all $j \in \{1,\ldots,n\}$ and $i \in \{u_j,\ldots,X-2\}$, the completion time of the upper bound jobs $j(i,1),\ldots,j(i,X)$ is at least $i+1$ by construction. This accounts for a sum of completion times of at least $X \cdot \sum^n_{j=1} \sum_{i=u_j}^{X-2} (i+1)$. Suppose that for some $j \in \{1,\ldots,n\}$, $i \in \{u_j,\ldots,X-2\}$, and $i_0 \in \{1,\ldots,X\}$ it holds that $C(j(i,i_0)) > i+1$. By minimality of the schedule, all upper bound jobs $j(i,1),\ldots,j(i,X)$ complete at the same time, as they are only in conflict with job $j$. Hence, the total completion time of all jobs is at least $X + X \cdot \sum^n_{j=1} \sum_{i=u_j}^{X-2} (i+1) > K$, a contradiction. Hence, $C(j(i,i_0)) = i+1$ for all $j \in \{1,\ldots,n\}$, $i \in \{u_j,\ldots,X-2\}$, and $i_0 \in \{1,\ldots,X\}$. 
It follows that $C(j) < X$ for any $j \in \{1,\ldots,n\}$, or we exceed the budget $K$. Since $j$ is in conflict with all its upper bound jobs, it follows that $C(j) \leq u_j$. Moreover, by construction, $C(j) > \ell_j$. Now create a coloring $\chi'$ by setting $\chi'(j) = C(j)$ for all $j \in \{1,\ldots,n\}$. This is a proper coloring with $\ell_j < \chi'(j) \leq u_j$ for all~$j \in \{1,\ldots,n\}$, as required.

Since this construction corresponds to a logspace parameterized reduction, the theorem follows.
\qed
\end{proof}

\section{Discussion}
In this paper, we investigated the complexity of scheduling problems with concurrency constraints, where the constraint graph has bounded treewidth or bounded pathwidth.
We considered different variants of the problem, with four different objective functions, uniform or different release times, and unit or variable job lengths. 
For these cases, we obtained a dichotomy concerning the fixed-parameter tractability of the problems: some variants are fixed-parameter tractable, while the remaining cases are XALP-complete. 
Several of the positive results make use of the fact that the Grundy number of a graph of treewidth $k$ is bounded by $O(k \log n)$, and of a graph of pathwith $k$ is
bounded by $O(k)$.


We may also consider further aspects of scheduling problems. First, if we consider preemption, then $p_j \geq 1$ but the times at which a job is processed is not necessarily a contiguous interval. The problem \threefield{P_n}{conc,pmtn}{C_{\max}} is known as {\sc Preemptive Multicoloring}, {\sc Multichromatic Number}, {\sc Weighted Coloring}, or {\sc Minimum Integer Weighted Coloring}~\cite{BalasX91,BalasX92,GLS1988,Hoang94,ItoNZ02,ItoNZ03,McDiarmidR00,ZhouN03}. We can reduce {\sc Preemptive Multicoloring} to {\sc Coloring} by replacing each job $j$ by $p_j$ jobs that are pairwise in conflict. That is, we replace each vertex by $p_j$ (true) twins. 
Note that this step can increase the pathwidth and treewidth
of the conflict graph by a multiplicative factor of $\max p_j$.
Thus, we can transfer positive results for the non-preemptive case with $p_j=1$ to positive results for the preemptive case with
$p_j\geq 1$, when taking $\max p_j$ as additional parameter. It is
interesting to investigate what happens when the $p_j$ are not
parameter bounded.
We may also consider \threefield{P_n}{conc,pmtn}{\sum C_j}, known as \textsc{Preemptive Multichromatic Sum}~\cite{Bar-NoyHKSS00,HalldorssonK02}.

Second, we can consider a bounded number $m$ of machines. If $p_j = 1$, then \threefield{P_m}{conc}{C_{\max}} is equivalent to the well-studied {\sc $m$-Bounded Coloring} problem and has also been called {\sc Mutual Exclusion Scheduling} and {\sc $\chi_m$-coloring}; see e.g.~\cite{Alon83,BakerC96,BodlaenderF05,BodlaenderJ93,BodlaenderJ95,HansenHK93,KallerGS95,Lonc91}. In particular, for fixed $m$, this problem is fixed-parameter tractable parameterized by the treewidth of the conflict graph~\cite{KallerGS95}, while for variable $m$, it is W[1]-hard~\cite{FellowsFLRSST11} and in XP~\cite{BodlaenderF05} parameterized by the treewidth of the conflict graph plus the makespan. This problem is also strongly related to the {\sc Equitable Coloring} problem, where the goal is to color a graph such that the size of any two color classes differs by at most~$1$.

\subsubsection*{Acknowledgments}
The first author wants to thank Meike Hatzel, Pascal Gollin,
Matja\v{z} Krnc, and Martin Milani\v{c} for discussions on \textsc{Chromatic Sum} 
and related problems. An extended abstract of this paper appears in the proceedings of the 51st International Workshop on Graph-Theoretic Concepts in Computer Science, WG 2025, published in the Springer Lecture Notes in Computer Science series.

\bibliographystyle{abbrvurl}
\bibliography{references}

\begin{thebibliography}{10}

\bibitem{Alon83}
N.~Alon.
\newblock A note on the decomposition of graphs into isomorphic matchings.
\newblock {\em Acta Mathematica Hungarica}, 42:221--223, 1983.
\newblock \href {https://doi.org/10.1007/BF01956769} {\path{doi:10.1007/BF01956769}}.

\bibitem{BakerC96}
B.~S. Baker and E.~G. {Coffman Jr.}
\newblock Mutual exclusion scheduling.
\newblock {\em Theoretical Computer Science}, 162(2):225--243, 1996.
\newblock \href {https://doi.org/10.1016/0304-3975(96)00031-X} {\path{doi:10.1016/0304-3975(96)00031-X}}.

\bibitem{BalasX91}
E.~Balas and J.~Xue.
\newblock Minimum weighted coloring of triangulated graphs, with application to maximum weight vertex packing and clique finding in arbitrary graphs.
\newblock {\em SIAM Journal on Computing}, 20(2):209--221, 1991.
\newblock \href {https://doi.org/10.1137/0220012} {\path{doi:10.1137/0220012}}.

\bibitem{BalasX92}
E.~Balas and J.~Xue.
\newblock Addendum: Minimum weighted coloring of triangulated graphs, with application to maximum weight vertex packing and clique finding in arbitrary graphs.
\newblock {\em SIAM Journal on Computing}, 21(5):1000, 1992.
\newblock \href {https://doi.org/10.1137/0221058} {\path{doi:10.1137/0221058}}.

\bibitem{Bar-NoyBHST98}
A.~Bar{-}Noy, M.~Bellare, M.~M. Halld{\'{o}}rsson, H.~Shachnai, and T.~Tamir.
\newblock On chromatic sums and distributed resource allocation.
\newblock {\em Information and Computation}, 140(2):183--202, 1998.
\newblock \href {https://doi.org/10.1006/INCO.1997.2677} {\path{doi:10.1006/INCO.1997.2677}}.

\bibitem{Bar-NoyHKSS00}
A.~Bar{-}Noy, M.~M. Halld{\'{o}}rsson, G.~Kortsarz, R.~Salman, and H.~Shachnai.
\newblock Sum multicoloring of graphs.
\newblock {\em Journal of Algorithms}, 37(2):422--450, 2000.
\newblock \href {https://doi.org/10.1006/JAGM.2000.1106} {\path{doi:10.1006/JAGM.2000.1106}}.

\bibitem{Bar-NoyK98}
A.~Bar{-}Noy and G.~Kortsarz.
\newblock Minimum color sum of bipartite graphs.
\newblock {\em Journal of Algorithms}, 28(2):339--365, 1998.
\newblock \href {https://doi.org/10.1006/JAGM.1998.0938} {\path{doi:10.1006/JAGM.1998.0938}}.

\bibitem{BelmonteKLMO22}
R.~Belmonte, E.~J. Kim, M.~Lampis, V.~Mitsou, and Y.~Otachi.
\newblock Grundy distinguishes treewidth from pathwidth.
\newblock {\em {SIAM} Journal on Discrete Mathematics}, 36(3):1761--1787, 2022.
\newblock \href {https://doi.org/10.1137/20M1385779} {\path{doi:10.1137/20M1385779}}.

\bibitem{BodlaenderF05}
H.~L. Bodlaender and F.~V. Fomin.
\newblock Equitable colorings of bounded treewidth graphs.
\newblock {\em Theoretical Computer Science}, 349(1):22--30, 2005.
\newblock URL: \url{https://doi.org/10.1016/j.tcs.2005.09.027}, \href {https://doi.org/10.1016/J.TCS.2005.09.027} {\path{doi:10.1016/J.TCS.2005.09.027}}.

\bibitem{BodlaenderGHK1995}
H.~L. Bodlaender, J.~R. Gilbert, H.~Hafsteinsson, and T.~Kloks.
\newblock Approximating treewidth, pathwidth, frontsize, and shortest elimination tree.
\newblock {\em Journal of Algorithms}, 18(2):238--255, 1995.
\newblock \href {https://doi.org/10.1006/jagm.1995.1009} {\path{doi:10.1006/jagm.1995.1009}}.

\bibitem{BodlaenderGJPP22}
H.~L. Bodlaender, C.~Groenland, H.~Jacob, M.~Pilipczuk, and M.~Pilipczuk.
\newblock On the complexity of problems on tree-structured graphs.
\newblock In H.~Dell and J.~Nederlof, editors, {\em 17th International Symposium on Parameterized and Exact Computation, {IPEC} 2022}, volume 249 of {\em LIPIcs}, pages 6:1--6:17. Schloss Dagstuhl - Leibniz-Zentrum f{\"{u}}r Informatik, 2022.
\newblock \href {https://doi.org/10.4230/LIPICS.IPEC.2022.6} {\path{doi:10.4230/LIPICS.IPEC.2022.6}}.

\bibitem{BODLAENDER2024105195}
H.~L. Bodlaender, C.~Groenland, J.~Nederlof, and C.~Swennenhuis.
\newblock Parameterized problems complete for nondeterministic {FPT} time and logarithmic space.
\newblock {\em Information and Computation}, page 105195, 2024.
\newblock \href {https://doi.org/10.1016/j.ic.2024.105195} {\path{doi:10.1016/j.ic.2024.105195}}.

\bibitem{BodlaenderJ93}
H.~L. Bodlaender and K.~Jansen.
\newblock On the complexity of scheduling incompatible jobs with unit-times.
\newblock In A.~M. Borzyszkowski and S.~Sokolowski, editors, {\em Proceedings of the 18th International Symposium on Mathematical Foundations of Computer Science, MFCS 1993}, volume 711 of {\em Lecture Notes in Computer Science}, pages 291--300. Springer, 1993.
\newblock \href {https://doi.org/10.1007/3-540-57182-5\_21} {\path{doi:10.1007/3-540-57182-5\_21}}.

\bibitem{BodlaenderJ95}
H.~L. Bodlaender and K.~Jansen.
\newblock Restrictions of graph partition problems. part {I}.
\newblock {\em Theoretical Computer Science}, 148(1):93--109, 1995.
\newblock \href {https://doi.org/10.1016/0304-3975(95)00057-4} {\path{doi:10.1016/0304-3975(95)00057-4}}.

\bibitem{BodlaenderMOPL23}
H.~L. Bodlaender, I.~Mannens, J.~J. Oostveen, S.~Pandey, and E.~J. van Leeuwen.
\newblock The parameterised complexity of integer multicommodity flow.
\newblock In N.~Misra and M.~Wahlstr{\"{o}}m, editors, {\em Proceedings of the 18th International Symposium on Parameterized and Exact Computation, {IPEC} 2023}, volume 285 of {\em LIPIcs}, pages 6:1--6:19. Schloss Dagstuhl - Leibniz-Zentrum f{\"{u}}r Informatik, 2023.
\newblock \href {https://doi.org/10.4230/LIPICS.IPEC.2023.6} {\path{doi:10.4230/LIPICS.IPEC.2023.6}}.

\bibitem{BodlaenderS24}
H.~L. Bodlaender and K.~Szil{\'{a}}gyi.
\newblock {XNLP}-hardness of parameterized problems on planar graphs.
\newblock In D.~Kr{\'{a}}l and M.~Milanic, editors, {\em Proceedings of the 50th International Workshop on Graph-Theoretic Concepts in Computer Science, {WG} 2024}, volume 14760 of {\em Lecture Notes in Computer Science}, pages 107--120. Springer, 2024.
\newblock \href {https://doi.org/10.1007/978-3-031-75409-8\_8} {\path{doi:10.1007/978-3-031-75409-8\_8}}.

\bibitem{BonomoV09}
F.~Bonomo and M.~Valencia{-}Pabon.
\newblock Minimum sum coloring of p\({}_{\mbox{4}}\)-sparse graphs.
\newblock {\em Electronic Notes on Discrete Mathematics}, 35:293--298, 2009.
\newblock URL: \url{https://doi.org/10.1016/j.endm.2009.11.048}, \href {https://doi.org/10.1016/J.ENDM.2009.11.048} {\path{doi:10.1016/J.ENDM.2009.11.048}}.

\bibitem{BorodinIYZ12}
A.~Borodin, I.~Ivan, Y.~Ye, and B.~Zimny.
\newblock On sum coloring and sum multi-coloring for restricted families of graphs.
\newblock {\em Theoretical Computer Science}, 418:1--13, 2012.
\newblock \href {https://doi.org/10.1016/J.TCS.2011.11.010} {\path{doi:10.1016/J.TCS.2011.11.010}}.

\bibitem{CoffmanEtAl85}
E.~G. Coffman, Jr., M.~R. Garey, D.~S. Johnson, and A.~S. LaPaugh.
\newblock Scheduling file transfers.
\newblock {\em SIAM Journal on Computing}, 14(3):744--780, 1985.
\newblock \href {https://doi.org/10.1137/0214054} {\path{doi:10.1137/0214054}}.

\bibitem{CyganFKLMPPS15}
M.~Cygan, F.~V. Fomin, L.~Kowalik, D.~Lokshtanov, D.~Marx, M.~Pilipczuk, M.~Pilipczuk, and S.~Saurabh.
\newblock {\em Parameterized Algorithms}.
\newblock Springer, 2015.
\newblock \href {https://doi.org/10.1007/978-3-319-21275-3} {\path{doi:10.1007/978-3-319-21275-3}}.

\bibitem{Diestel}
R.~Diestel.
\newblock {\em Graph Theory, 4th Edition}, volume 173 of {\em Graduate Texts in Mathematics}.
\newblock Springer, 2012.

\bibitem{DujmovicJW12}
V.~Dujmovic, G.~Joret, and D.~R. Wood.
\newblock An improved bound for first-fit on posets without two long incomparable chains.
\newblock {\em {SIAM} Journal on Discrete Mathematics}, 26(3):1068--1075, 2012.
\newblock \href {https://doi.org/10.1137/110855806} {\path{doi:10.1137/110855806}}.

\bibitem{ElberfeldST15}
M.~Elberfeld, C.~Stockhusen, and T.~Tantau.
\newblock On the space and circuit complexity of parameterized problems: Classes and completeness.
\newblock {\em Algorithmica}, 71(3):661--701, 2015.
\newblock \href {https://doi.org/10.1007/s00453-014-9944-y} {\path{doi:10.1007/s00453-014-9944-y}}.

\bibitem{FellowsFLRSST11}
M.~R. Fellows, F.~V. Fomin, D.~Lokshtanov, F.~A. Rosamond, S.~Saurabh, S.~Szeider, and C.~Thomassen.
\newblock On the complexity of some colorful problems parameterized by treewidth.
\newblock {\em Information and Compututation}, 209(2):143--153, 2011.
\newblock URL: \url{https://doi.org/10.1016/j.ic.2010.11.026}, \href {https://doi.org/10.1016/J.IC.2010.11.026} {\path{doi:10.1016/J.IC.2010.11.026}}.

\bibitem{GareyJ79}
M.~R. Garey and D.~S. Johnson.
\newblock {\em Computers and Intractability: {A} Guide to the Theory of NP-Completeness}.
\newblock W. H. Freeman, 1979.

\bibitem{Gavril72}
F.~Gavril.
\newblock Algorithms for minimum coloring, maximum clique, minimum covering by cliques, and maximum independent set of a chordal graph.
\newblock {\em SIAM Journal on Computing}, 1(2):180--187, 1972.
\newblock \href {https://doi.org/10.1137/0201013} {\path{doi:10.1137/0201013}}.

\bibitem{Golumbic80}
M.~C. Golumbic.
\newblock {\em Algorithmic Graph Theory and Perfect Graphs}.
\newblock Academic Press, 1980.

\bibitem{GrahamEtAl79}
R.~Graham, E.~Lawler, J.~Lenstra, and A.~Kan.
\newblock Optimization and approximation in deterministic sequencing and scheduling: a survey.
\newblock {\em Annals of Discrete Mathematics}, 3:287--326, 1979.
\newblock \href {https://doi.org/10.1016/S0167-5060(08)70356-X} {\path{doi:10.1016/S0167-5060(08)70356-X}}.

\bibitem{GLS1988}
M.~Gr{\"{o}}tschel, L.~Lov{\'{a}}sz, and A.~Schrijver.
\newblock {\em Geometric Algorithms and Combinatorial Optimization}, volume~2 of {\em Algorithms and Combinatorics}.
\newblock Springer, 1988.
\newblock \href {https://doi.org/10.1007/978-3-642-97881-4} {\path{doi:10.1007/978-3-642-97881-4}}.

\bibitem{HalldorssonK02}
M.~M. Halld{\'{o}}rsson and G.~Kortsarz.
\newblock Tools for multicoloring with applications to planar graphs and partial $k$-trees.
\newblock {\em Journal of Algorithms}, 42(2):334--366, 2002.
\newblock \href {https://doi.org/10.1006/JAGM.2001.1210} {\path{doi:10.1006/JAGM.2001.1210}}.

\bibitem{HansenHK93}
P.~Hansen, A.~Hertz, and J.~Kuplinsky.
\newblock Bounded vertex colorings of graphs.
\newblock {\em Discrete Mathematics}, 111(1-3):305--312, 1993.
\newblock \href {https://doi.org/10.1016/0012-365X(93)90165-P} {\path{doi:10.1016/0012-365X(93)90165-P}}.

\bibitem{Hoang94}
C.~T. Ho{\`{a}}ng.
\newblock Efficient algorithms for minimum weighted colouring of some classes of perfect graphs.
\newblock {\em Discrete Applied Mathematics}, 55(2):133--143, 1994.
\newblock \href {https://doi.org/10.1016/0166-218X(94)90004-3} {\path{doi:10.1016/0166-218X(94)90004-3}}.

\bibitem{ItoNZ02}
T.~Ito, T.~Nishizeki, and X.~Zhou.
\newblock Algorithms for the multicolorings of partial k-trees.
\newblock In O.~H. Ibarra and L.~Zhang, editors, {\em Proceedings of the 8th Annual International Conference Computing and Combinatorics, {COCOON} 2002}, volume 2387 of {\em Lecture Notes in Computer Science}, pages 430--439. Springer, 2002.
\newblock \href {https://doi.org/10.1007/3-540-45655-4\_46} {\path{doi:10.1007/3-540-45655-4\_46}}.

\bibitem{ItoNZ03}
T.~Ito, T.~Nishizeki, and X.~Zhou.
\newblock Algorithms for multicolorings of partial $k$-trees.
\newblock {\em IEICE TRANSACTIONS on Information and Systems}, E86-D(2):191--200, 2003.
\newblock URL: \url{https://search.ieice.org/bin/summary.php?id=e86-d\_2\_191}.

\bibitem{Jansen97man}
K.~Jansen.
\newblock Complexity results for the optimum cost chromatic partition problem.
\newblock manuscript, 1997.

\bibitem{Jansen97}
K.~Jansen.
\newblock The optimum cost chromatic partition problem.
\newblock In G.~C. Bongiovanni, D.~P. Bovet, and G.~D. Battista, editors, {\em Proceedings of the 3rd Italian Conference on Algorithms and Complexity, {CIAC} 1997}, volume 1203 of {\em Lecture Notes in Computer Science}, pages 25--36. Springer, 1997.
\newblock \href {https://doi.org/10.1007/3-540-62592-5\_58} {\path{doi:10.1007/3-540-62592-5\_58}}.

\bibitem{KallerGS95}
D.~Kaller, A.~Gupta, and T.~C. Shermer.
\newblock The chi-t-coloring problem.
\newblock In E.~W. Mayr and C.~Puech, editors, {\em Proceedings of the 12th Annual Symposium on Theoretical Aspects of Computer Science, STACS 1995}, volume 900 of {\em Lecture Notes in Computer Science}, pages 409--420. Springer, 1995.
\newblock \href {https://doi.org/10.1007/3-540-59042-0\_92} {\path{doi:10.1007/3-540-59042-0\_92}}.

\bibitem{Karp72}
R.~M. Karp.
\newblock Reducibility among combinatorial problems.
\newblock In R.~E. Miller and J.~W. Thatcher, editors, {\em Proceedings of a symposium on the Complexity of Computer Computations, held March 20-22, 1972, at the {IBM} Thomas J. Watson Research Center, Yorktown Heights, New York, {USA}}, The {IBM} Research Symposia Series, pages 85--103. Plenum Press, New York, 1972.
\newblock \href {https://doi.org/10.1007/978-1-4684-2001-2\_9} {\path{doi:10.1007/978-1-4684-2001-2\_9}}.

\bibitem{Kloks93}
T.~Kloks.
\newblock {\em Treewidth. Computations and Approximations}, volume 842 of {\em Lecture Notes in Computer Science}.
\newblock Springer, 1994.
\newblock \href {https://doi.org/10.1007/BFB0045375} {\path{doi:10.1007/BFB0045375}}.

\bibitem{Korhonen21}
T.~Korhonen.
\newblock A single-exponential time 2-approximation algorithm for treewidth.
\newblock In {\em 62nd {IEEE} Annual Symposium on Foundations of Computer Science, {FOCS} 2021}, pages 184--192. {IEEE}, 2021.
\newblock \href {https://doi.org/10.1109/FOCS52979.2021.00026} {\path{doi:10.1109/FOCS52979.2021.00026}}.

\bibitem{Kubale89}
M.~Kubale.
\newblock Interval vertex-coloring of a graph with forbidden colors.
\newblock {\em Discrete Mathematics}, 74(1):125--136, 1989.
\newblock \href {https://doi.org/10.1016/0012-365X(89)90204-5} {\path{doi:10.1016/0012-365X(89)90204-5}}.

\bibitem{Kubicka04}
E.~M. Kubicka.
\newblock The chromatic sum of a graph: history and recent developments.
\newblock {\em International Journal of Mathematics and Mathematical Sciences}, 2004(30):1563--1573, 2004.
\newblock \href {https://doi.org/10.1155/S0161171204306216} {\path{doi:10.1155/S0161171204306216}}.

\bibitem{KubickaS89}
E.~M. Kubicka and A.~J. Schwenk.
\newblock An introduction to chromatic sums.
\newblock In A.~M. Riehl, editor, {\em Computer Trends in the 1990s - Proceedings of the 1989 {ACM} 17th Annual Computer Science Conference, Louisville, Kentucky, USA, February 21-23, 1989}, pages 39--45. {ACM}, 1989.
\newblock \href {https://doi.org/10.1145/75427.75430} {\path{doi:10.1145/75427.75430}}.

\bibitem{LokshtanovMS18}
D.~Lokshtanov, D.~Marx, and S.~Saurabh.
\newblock Known algorithms on graphs of bounded treewidth are probably optimal.
\newblock {\em {ACM} Transactions on Algorithms}, 14(2):13:1--13:30, 2018.
\newblock \href {https://doi.org/10.1145/3170442} {\path{doi:10.1145/3170442}}.

\bibitem{Lonc91}
Z.~Lonc.
\newblock On complexity of some chain and antichain partition problems.
\newblock In G.~Schmidt and R.~Berghammer, editors, {\em Proceedings 17th International Workshop on Graph Theoretic Concepts in Computer Science, {WG} 1991}, volume 570 of {\em Lecture Notes in Computer Science}, pages 97--104. Springer, 1991.
\newblock \href {https://doi.org/10.1007/3-540-55121-2\_9} {\path{doi:10.1007/3-540-55121-2\_9}}.

\bibitem{MalafiejskiGJK04}
M.~Malafiejski, K.~Giaro, R.~Janczewski, and M.~Kubale.
\newblock Sum coloring of bipartite graphs with bounded degree.
\newblock {\em Algorithmica}, 40(4):235--244, 2004.
\newblock \href {https://doi.org/10.1007/S00453-004-1111-4} {\path{doi:10.1007/S00453-004-1111-4}}.

\bibitem{Marx05}
D.~Marx.
\newblock A short proof of the {NP}-completeness of minimum sum interval coloring.
\newblock {\em Operations Research Letters}, 33(4):382--384, 2005.
\newblock \href {https://doi.org/10.1016/J.ORL.2004.07.006} {\path{doi:10.1016/J.ORL.2004.07.006}}.

\bibitem{McDiarmidR00}
C.~McDiarmid and B.~A. Reed.
\newblock Channel assignment and weighted coloring.
\newblock {\em Networks}, 36(2):114--117, 2000.
\newblock \href {https://doi.org/10.1002/1097-0037(200009)36:2<114::AID-NET6>3.0.CO;2-G} {\path{doi:10.1002/1097-0037(200009)36:2<114::AID-NET6>3.0.CO;2-G}}.

\bibitem{NicolosoSS99}
S.~Nicoloso, M.~Sarrafzadeh, and X.~Song.
\newblock On the sum coloring problem on interval graphs.
\newblock {\em Algorithmica}, 23(2):109--126, 1999.
\newblock \href {https://doi.org/10.1007/PL00009252} {\path{doi:10.1007/PL00009252}}.

\bibitem{PilipczukW18}
M.~Pilipczuk and M.~Wrochna.
\newblock On space efficiency of algorithms working on structural decompositions of graphs.
\newblock {\em {ACM} Transactions on Computation Theory}, 9(4):18:1--18:36, 2018.
\newblock \href {https://doi.org/10.1145/3154856} {\path{doi:10.1145/3154856}}.

\bibitem{PinedoBook}
M.~Pinedo.
\newblock {\em Scheduling: Theory, Algorithms, and Systems, 5th Edition}.
\newblock Springer, 2016.

\bibitem{Salavatipour00}
M.~Salavatipour.
\newblock On sum coloring of graphs.
\newblock Master's thesis, University of Toronto, 2000.
\newblock URL: \url{https://hdl.handle.net/1807/15002}.

\bibitem{salavatipour2003sum}
M.~R. Salavatipour.
\newblock On sum coloring of graphs.
\newblock {\em Discrete Applied Mathematics}, 127(3):477--488, 2003.
\newblock \href {https://doi.org/10.1016/S0166-218X(02)00249-4} {\path{doi:10.1016/S0166-218X(02)00249-4}}.

\bibitem{Supowit87}
K.~Supowit.
\newblock Finding a maximum planar subset of a set of nets in a channel.
\newblock {\em IEEE Transactions on Computer-Aided Design of Integrated Circuits and Systems}, 6(1):93--94, 1987.
\newblock \href {https://doi.org/10.1109/TCAD.1987.1270250} {\path{doi:10.1109/TCAD.1987.1270250}}.

\bibitem{Szkaliczki99}
T.~Szkaliczki.
\newblock Routing with minimum wire length in the dogleg-free manhattan model is {NP}-complete.
\newblock {\em {SIAM} Journal on Computing}, 29(1):274--287, 1999.
\newblock \href {https://doi.org/10.1137/S0097539796303123} {\path{doi:10.1137/S0097539796303123}}.

\bibitem{TelleP1997}
J.~A. Telle and A.~Proskurowski.
\newblock Algorithms for vertex partitioning problems on partial $k$-trees.
\newblock {\em SIAM Journal on Discrete Mathematics}, 10(4):529--550, 1997.
\newblock \href {https://doi.org/10.1137/S0895480194275825} {\path{doi:10.1137/S0895480194275825}}.

\bibitem{ZhouN03}
X.~Zhou and T.~Nishizeki.
\newblock Multicolorings of series-parallel graphs.
\newblock {\em Algorithmica}, 38(2):271--297, 2004.
\newblock \href {https://doi.org/10.1007/S00453-003-1060-3} {\path{doi:10.1007/S00453-003-1060-3}}.

\end{thebibliography}

\end{document}